\begin{document}
\pagenumbering{arabic}
\title{Design of Bilayer and Multi-layer LDPC Ensembles from Individual Degree Distributions}
\author{
\vspace*{0.8cm} Eshed Ram \qquad Yuval Cassuto\\
Andrew and Erna Viterbi Department of Electrical Engineering \\
Technion -- Israel Institute of Technology, Haifa 32000, Israel\\
E-mails: \{s6eshedr@campus, ycassuto@ee\}.technion.ac.il
}

\maketitle

\begin{abstract}
	A new approach for designing bilayer and multi layer LDPC codes is proposed and studied in the asymptotic regime. The ensembles are defined through individual uni-variate degree distributions, one for each layer. We present a construction that: 1) enables low-complexity decoding for high-SNR channel instances, 2) provably approaches capacity for low-SNR instances, 3) scales linearly (in terms of design complexity) in the number of layers. For the setup where decoding the second layer is significantly more costly than the first layer, we propose an optimal-cost decoding schedule and study the trade-off between code rate and decoding cost.
	\footnote{Part of the results of this paper were presented at the 2018 International Symposium on Information Theory.}
\end{abstract}

\section{Introduction}
\label{Sec:Intro}

Low-Density Parity-Check (LDPC) codes and their message-passing decoding algorithms \cite{Gala62} are an efficient way to achieve Shannon's limit on various channels. LDPC codes are extremely powerful because they can attain competitive performance with low-complexity decoding (message passing on sparse bi-partite graphs) and simple code design (random drawing from explicit code ensembles). Many extensions of the basic LDPC construction have been proposed to enhance the code's functionality through more structured code graphs, while maintaining the convenience of randomly drawing codes from ensembles. Some prominent examples are repeat-accumulate codes \cite{DivsalarJin98,JinMcEliece00}, protograph-based codes \cite{Thorpe03}, spatially-coupled codes \cite{KudRichUrb11}, and multi-edge-type (MET) codes \cite{RichUrb08} (which can be viewed as a meta-class containing all the others). An especially useful class of structured LDPC codes is {\em bilayer codes} \cite{RazaYu07,AzmiYuan08,RazYu06,WeinMart09} (and more generally multi-layer codes), which allow decoding the same codeword with two (or more) different decoders. A recent interest in bilayer LDPC codes is raised by storage applications, in which the code needs to be designed for both extreme and average channel conditions \cite{RamCassuto18a,ShaAl20}.

In this paper we develop new tools for the construction and (asymptotic) analysis of bilayer LDPC code ensembles. An important feature of these tools is that they lend themselves well for extending beyond two layers, which we do later in the paper. A bilayer LDPC code has a code graph in which the variable nodes are connected to two types of check nodes, allowing different connectivity to each type. There are several applications that motivate codes with bilayer structure: communication over the relay channel \cite{RazaYu07}, decoder parallelization \cite{Li17}, multi-block coding \cite{RamCassuto18a}, incremental redundancy \cite{WangRangWesel17,WangRangWesel18,WangRangWesel19}, to name a few. Bilayer codes are a special case of MET codes that are defined at full generality in \cite{RichUrb08}. However, to gain tools and insight for construction and analysis, we find it beneficial to use a more compact specification of the codes than needed when viewed as MET codes. In particular, this compact specification is the key to our ability to extend the results to more than two layers without complexity blow-up.

\subsection{Contributions}
\label{Sub:contributions}
Bilayer LDPC codes are designed to simultaneously guarantee correction capabilities in two decoding modes: layer 1 only, and layers 1+2. More generally, $L$-layer codes are designed for $L$ decoding modes: layer 1, layers 1+2, ... , layers 1+2+$\cdots$+$L$. The design of such codes in this paper is pursued through a new approach: {\em the codes are defined by specifying each layer separately as a standard degree-distribution pair}. This approach may at first seem less natural given that we decode layers 1+2 jointly, and not layer 2 separately. Indeed, prior work \cite{RazaYu07} designed the joint layer-1+2 code directly. The advantage of the new approach is that working with standard degree distributions (specified as uni-variate polynomials), rather than product degree distributions (specified as multi-variate polynomials), enables tractable design of explicit code ensembles with provable asymptotic performance. 

The basis of this approach is laid in Section~\ref{Sec:2D DE}, where the correction capability of the layer-1+2 code is characterized mathematically given the separate layer-1 and layer-2 degree distributions. This is achieved by deriving a two-dimensional density-evolution framework, where decoding thresholds are found as certain fixed points in two variables (each variable tracks the density on edges of one layer). 

Section~\ref{Sec:C1} provides a general construction for bilayer codes with any desired thresholds for layer-1 and layer-1+2 decoding. The resulting codes are given as explicit degree distributions (building on known properties of standard single-layer codes), without need to employ optimization tools such as linear programming. In particular, the construction is used to construct code sequences that approach capacity for layer-1+2 decoding, while guaranteeing any desired threshold for layer-1 decoding. Furthermore, the additive gap to capacity of the layer-1+2 code is characterized (and bounded) given the gaps to capacity of the individual layer-1 and layer-2 degree distributions. Section~\ref{Sec:Multi} generalizes the results of Sections~\ref{Sec:2D DE},\ref{Sec:C1} to $L$-layer codes, for any $L\geq 2$. 

In Section~\ref{Sec:N_JI} we treat a model in which layer-2 decoding iterations are more costly than layer-1, and thus we seek codes that successfully decode layers 1+2 with few layer-2 iterations. For this model we propose optimal-cost decoding schedules, and study the trade-off between rate (through layer-1 and layer-2 gaps to capacity) and decoding cost. 

All of our results are given for coding over the binary erasure channel (BEC), however, the same analysis and constructions extend to other channels (such as the AWGN channel) using the EXIT method.

\subsection{Related Work}
\label{Sub:relate}
Bilayer LDPC codes are used to implement binning for the relay channel in \cite{RazaYu07,AzmiYuan08,RazYu06} (\cite{WeinMart09} also suggests bilayer codes with one layer being a low-density generator-matrix (LDGM) code). 
The design approach in these works is to first optimize layer 1 for the source-relay channel (higher SNR), and then optimize layers 1+2 for the source-destination channel (lower SNR) constrained to be consistent with layer 1. This approach extends the classical single-layer linear-programming ensemble-design framework by taking the layer 1+2 optimization variables to be coefficients of a {\em bi-variate} degree-distribution polynomial, and adding the layer-1 consistency constraints. This extension, however, entails solving an optimization problem with many variables: the {\em product} of the maximal degrees in layer 1 and layer 2. Furthermore, extending this approach to $L$-layer codes would make the number of variables grow exponentially with $L$ (viewed as MET codes \cite{RichUrb08}, the canonical representation of an $L$-layer ensemble is by an $L$-variate polynomial, whose number of coefficients grows exponentially with $L$ for a given maximum layer degree.) Our results in this paper provide a design alternative that avoids this multiplicative/exponential growth of complexity, and also extend beyond optimization frameworks to offer analytical insight. 

Relevant to our techniques is the work in \cite{BarEr10} where (single-layer) LDPC codes are used for a BEC with an erasure rate that can take one of two values: full decoding is sought for the better channel, while for the worse channel some partial correction performance is specified. More related work includes non-asymptotic design of bilayer LDPC codes such as codes that enable decoder parallelism \cite{Li17} and codes for incremental redundancy \cite{WangRangWesel17,WangRangWesel18,WangRangWesel19} (whose layer 2 is an LDGM code). Another indication of interest in bilayer/multi-layer codes is a parallel work on (non-LDPC) algebraic bilayer/multi-layer codes  \cite{Hass01, HanMont07, BlauHetz16, CassHemo17}.

\section{Preliminaries and Notations}
\label{Sec:Pre}

\subsection{LDPC Codes}
\label{sub:LDPC}

A linear block code is an LDPC code if it has at least one parity-check matrix that is sparse, i.e., the number of 1's in $H$ is linear in the block length. 
Every parity-check matrix $H$ can be represented by a bipartite graph, called a Tanner graph, with nodes partitioned to variable nodes and check nodes; there exists an edge between check node $i$ and variable node $j$, if and only if $H_{ij}=1$ (this paper focuses on binary linear codes, but this representation can be generalized). In single-edge-type LDPC codes, the fraction of variable (resp. check) nodes in a Tanner graph with degree $i$ is denoted by $\Lambda_i$ (resp. $\Omega_i$), and the fraction of edges connected to variable (resp. check) nodes of degree $i$ is denoted by $\lambda_i$ (resp. $\rho_i$); $\Lambda_i$ and $\Omega_i$ are called node-perspective degree distributions, and $\lambda_i$ and $\rho_i$ are called edge-perspective degree distributions. 
The degree-distribution polynomials associated to a Tanner graph are given by
\begin{align} 
\label{eq:dd poly1}
&\Lambda(x)=\sum_{i}\Lambda_{i} x^i,    \qquad  \lambda(x)=\sum_{i}\lambda_{i} x^{i-1},\quad x \in [0,1],\\
\label{eq:dd poly2}
& \Omega(x)=\sum_{i}\Omega_{i}x^i ,\qquad \rho(x) = \sum_{i}\rho_{i} x^{i-1},\quad x \in [0,1].
\end{align}

\subsection{Bilayer LDPC Codes}
\label{sub:Tanner}
\subsubsection{Graph Structure}
In the bilayer Tanner graph, the check nodes are divided into two disjoint sets: \emph{type-1} and \emph{type-2} check nodes. 
The edges of the graph are partitioned into two sets as well: edges connecting variable nodes to type-1 (resp. type-2) check nodes are type-1 (resp. type-2) edges (see Figure~\ref{Fig:MB Tanner}). Finally, each variable node has two types of degrees, corresponding to the two edge types.  

Bilayer codes can be generalized to allow multi-block codes \cite{RamCassuto18a} where the sets of variable nodes and type-1 check nodes are partitioned into disjoint subsets, each being connected ``locally", and type-2 check nodes are connected across all subsets.
Our results in the sequel carry over to this generalization, but for simplicity we present them for the simpler structure depicted in Figure~\ref{Fig:MB Tanner}.
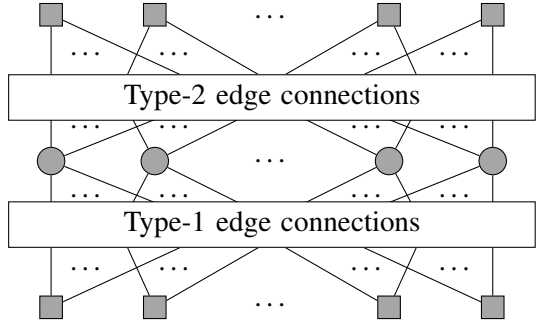
\begin{figure}[!h]

	\begin{center}
     	\begin{tikzpicture}[>=latex]\label{Tikz:bilayer tanner}
     	\tikzstyle{cnode}=[rectangle,draw,fill=gray!70!white,minimum size=3mm]
     	\tikzstyle{vnode}=[circle,draw,fill=gray!70!white,minimum size=2mm]
		\pgfmathsetmacro{\x}{10}
		\pgfmathsetmacro{\w}{1.5}
		\pgfmathsetmacro{\y}{4}
     	\node (v1) [vnode] at (0,0) {}; \node (v2) [vnode,right=\x mm of v1] {};
     	\node (dotsV) [right=\x mm of v2] {\dots};\node (v3) [vnode,right=\x mm of dotsV] {};
     	\node (v4) [vnode,right=\x mm of v3] {};
     	\node (mid) at ($(v3)!0.5!(v2)$) {};
     	\node (midu) [above = 1.6*\y mm of mid] {};
     
     	\node (midd) [below = 1.6*\y mm of mid] {};
     	\node (c11) [cnode,below = 4*\y mm of v1] {}; \node (c12) [cnode,below = 4*\y mm of v2] {}; 
     	\node (dotsC1) [below = 4*\y mm of dotsV] {\dots};  \node (c13) [cnode,below = 4*\y mm of v3] {}; 
     	\node (c14) [cnode,below = 4*\y mm of v4] {}; 
     	\node (c21) [cnode,above = 4*\y mm of v1] {}; \node (c22) [cnode,above = 4*\y mm of v2] {}; 
     	\node (dotsC2) [above = 4*\y mm of dotsV] {\dots};  \node (c23) [cnode,above = 4*\y mm of v3] {}; 
     	\node (c24) [cnode,above = 4*\y mm of v4] {}; 
     	
     	\node  (l2) at (c21|-midu) {};	\node  (r2) at (midu-|c24) {};	
     	\node  (l1) at (c11|-midd) {};	\node  (r1) at (midd-|c14) {};	
     	\node (l21) at ($(l2)!0.33!(midu)$) {};\node (l22) at ($(l2)!0.66!(midu)$) {};
     	\node (r21) at ($(r2)!0.33!(midu)$) {};\node (r22) at ($(r2)!0.66!(midu)$) {};
     	\node (l11) at ($(l1)!0.33!(midd)$) {};\node (l12) at ($(l1)!0.66!(midd)$) {};
     	\node (r11) at ($(r1)!0.33!(midd)$) {};\node (r12) at ($(r1)!0.66!(midd)$) {};
     	
     	\draw (c21)--(l2.north);\node at($(c21)!0.5!(l21)$) {\dots} ; \draw (c21)--(l22.north);
     	\draw (c22)--(l21.north);\node at($(c22)!0.5!(l22)$) {\dots} ; \draw (c22)--(midu.north);
     	\draw (c24)--(r2.north);\node at($(c24)!0.5!(r21)$) {\dots} ; \draw (c24)--(r22.north);
     	\draw (c23)--(r21.north);\node at($(c23)!0.5!(r22)$) {\dots} ; \draw (c23)--(midu.north);
     	
     	\draw (c11)--(l1.south);\node at($(c11)!0.5!(l11)$) {\dots} ; \draw (c11)--(l12.south);
     	\draw (c12)--(l11.south);\node at($(c12)!0.5!(l12)$) {\dots} ; \draw (c12)--(midd.south);
     	\draw (c14)--(r1.south);\node at($(c14)!0.5!(r11)$) {\dots} ; \draw (c14)--(r12.south);
     	\draw (c13)--(r11.south);\node at($(c13)!0.5!(r12)$) {\dots} ; \draw (c13)--(midd.south);
     	
     	\draw (v1)--(l1.north);\node at($(v1)!0.5!(l11)$) {\dots} ; \draw (v1)--(l12.north);
       	\draw (v2)--(l11.north);\node at($(v2)!0.5!(l12)$) {\dots} ; \draw (v2)--(midd.north);
       	\draw (v4)--(r1.north);\node at($(v4)!0.5!(r11)$) {\dots} ; \draw (v4)--(r12.north);
       	\draw (v3)--(r11.north);\node at($(v3)!0.5!(r12)$) {\dots} ; \draw (v3)--(midd.north);
       	
       	\draw (v1)--(l2.south);\node at($(v1)!0.5!(l21)$) {\dots} ; \draw (v1)--(l22.south);
       	\draw (v2)--(l21.south);\node at($(v2)!0.5!(l22)$) {\dots} ; \draw (v2)--(midu.south);
       	\draw (v4)--(r2.south);\node at($(v4)!0.5!(r21)$) {\dots} ; \draw (v4)--(r22.south);
       	\draw (v3)--(r21.south);\node at($(v3)!0.5!(r22)$) {\dots} ; \draw (v3)--(midu.south);
       	
     	\node (midu) [draw,above = \y mm of mid,minimum width=7*\x mm,fill=white] {Type-2 edge connections};
    	\node (midd) [draw,below = \y mm of mid,minimum width=7*\x mm,fill=white] {Type-1 edge connections};

     	\end{tikzpicture}
     \end{center}

     \caption{\label{Fig:MB Tanner}
      An illustration of a bilayer Tanner graph}

\end{figure}

We denote by $\Lambda^{(1)}_{i}$ (resp. $\Lambda^{(2)}_{i}$) the fraction of variable nodes with type-1 (resp. type-2) degree $i$. Similarly, $\Omega^{(1)}_{i}$ (resp. $\Omega^{(2)}_{i}$) is the fraction of type-1 (resp. type-2)  check nodes of degree $i$. $\lambda^{(1)}_{i}$ (resp. $\lambda^{(2)}_{i}$) designates the fraction of type-1 (resp. type-2) edges connected to a variable node with type-1 (resp. type-2) degree $i$, and $\rho^{(1)}_{i}$ (resp. $\rho^{(2)}_{i}$) designates the fraction of type-1 (resp. type-2) edges connected to a type-1 (resp. type-2) check node of degree $i$. Note that since the type-1 sub-graph (i.e., variable nodes, type-1 check nodes, and type-1 edges) is supposed to be used in decoding without the type-2 check nodes and edges, then $\Lambda^{(1)}_{0}=\Lambda^{(1)}_{1}=0$. However, this is not the case with the type-2 sub-graph (which is assumed to be used together with the type-1 sub-graph), and we do allow $\Lambda^{(2)}_{0},\Lambda^{(2)}_{1}>0$ (see also \cite{RazaYu07}). In sections~\ref{Sec:2D DE}--\ref{Sec:C1}, we will use $P_0$ to denote the coefficient $\Lambda^{(2)}_{0}$.
The type-1 and type-2 degree-distribution polynomials $\Lambda^{(i)}(\cdot),\lambda^{(i)}(\cdot),\Omega^{(i)}(\cdot),\rho^{(i)}(\cdot),\, i\in\{1,2\}$ are defined similarly to the degree-distribution polynomials for ordinary LDPC codes in \eqref{eq:dd poly1}-\eqref{eq:dd poly2}. 

The ensembles induced by the above description of bilayer graphs are characterized by the block length $n$, and the above degree distributions. The design rate of this ensemble is given by
\begin{align}\label{eq:design rate2}
R =  1 - \frac{\int_0^1\rho^{(1)}(x)\mathrm{d}x}{\int_0^1\lambda^{(1)}(x)\mathrm{d}x} -  \frac{\int_0^1\rho^{(2)}(x)\mathrm{d}x}{\int_0^1\lambda^{(2)}(x)\mathrm{d}x}\left( 1-P_0\right).
\end{align}
We can see in \eqref{eq:design rate2}, that setting $P_0>0$ allows increasing the code rate, which we later find crucial in our constructions.

\subsubsection{Density Evolution}
We distinguish between decoding using only layer 1, i.e., the type-1 sub-graph, and decoding using the entire graph (both layers). This separation yields a two-level threshold behavior: when using only layer 1, the asymptotic threshold is denoted by $\epsilon_1^*$, and when using both layers the threshold is marked as $\epsilon_2^*$. Since the second layer can only help, then $\epsilon_2^*\geq \epsilon_1^*$. However, decoding using layer 1 only has complexity advantages, and if the signal-to-noise ratio (SNR) is high enough, then layer 1 suffices. Moreover, in some applications layer-1 decoding is performed on fewer variable nodes comprising a sub-block of the full code block (see \cite{RamCassuto18a}), further reducing the complexity.

When decoding the type-1 sub-graph, one can use the known density evolution method for LDPC codes to analyze the performance. Specifically, 
the fraction of variable-to-check erasure messages after $l$ BP iterations over the BEC($\epsilon$), $x_l(\epsilon)$, is given by the recursive equation \cite{RichUrb08}
\begin{align*}
x_l(\epsilon) = \epsilon \lambda^{(1)}(1-\rho^{(1)}(1-x_{l-1})).
\end{align*}
From this, $\epsilon^{*}_1$ can be calculated via 
\begin{align} \label{eq:1D th numeric}
\epsilon^{*}_1 = \inf_{x\in(0,1]} \frac{x}{\lambda^{(1)}(1-\rho^{(1)}(1-x))}.
\end{align}

When decoding both layers, one should consider both degree-distribution pairs for the analysis, since, as a specific instance of MET codes, the graph structure plays a crucial rule in the decoding analysis. 
While MET codes can be specified in full generality using degree-distributions multinomials \cite{RichUrb08}, more compact representations are typically helpful for obtaining effective analysis and design tools for particular classes of MET codes.
For example, in \cite{RazaYu07}, it is shown that for a variable node with type-1 and type-2 degrees $ i $ and $ j $, respectively, and incoming type-1 and type-2 densities (erasure rates) $ u \in[0,1]$ and $w \in[0,1]$, respectively, the outgoing type-1 and type-2 densities $ x $ and $y$, respectively, are given by \cite[Eq. (15)--(16)]{RazaYu07}
\begin{align*}
x=\epsilon\cdot w^{j} \cdot u^{i-1},\quad y=\epsilon \cdot w^{j-1} \cdot u^{i}.
\end{align*}
The authors in \cite{RazaYu07} define the bilayer code through a product variable-node degree distribution specified by a bivariate polynomial with coefficients $ \lambda_{i,j} $ ($ i $ for layer 1 and $ j $ for layer 2) and regular check node degrees. They then pursue code design using linear-programming optimization of the product degree distribution  $ \lambda_{i,j} $, under the constraint that it is consistent with a given (capacity approaching) degree distribution for layer 1.

We take a different approach and specify (and then design) the code ensembles through the individual degree distributions $\left (\lambda^{(1)}(\cdot),\rho^{(1)}(\cdot)\right ) $ and $\left (\lambda^{(2)}(\cdot),\rho^{(2)}(\cdot),P_0\right )$. As we will later see, this approach offers analysis and construction advantages compared to \cite{RazaYu07}.

\section{Threshold}\label{Sec:2D DE}
In this section, we study the asymptotic decoding threshold of bilayer ensembles, as defined in Section~\ref{Sec:Pre}, when using both layers. 
Our ultimate goal (in Section~\ref{Sec:C1}) is to provide a design tool for constructing bilayer LDPC codes: given two noise levels, $ \epsilon_1 $ and $ \epsilon_2 $, choose $\left (\lambda^{(1)}(\cdot),\rho^{(1)}(\cdot)\right ) $ such that layer 1 provides the correction capability to tolerate $ \epsilon_1 $, and then set $\left (\lambda^{(2)}(\cdot),\rho^{(2)}(\cdot),P_0\right )$ such that the entire graph provides the correction capability to tolerate $ \epsilon_2 $. Working with $\left (\lambda^{(2)}(\cdot),\rho^{(2)}(\cdot),P_0\right )$ instead of the product degree distribution as in \cite{RazaYu07} enables, among other benefits, finding capacity-approaching sequences for the full-graph  code.
The derivations in this section lay the theoretical infrastructure needed to show the optimality of our constructions (i.e., capacity achieving in Section~\ref{Sec:C1}). 

We assume that the message scheduling when decoding the entire graph is a flooding schedule: the variable nodes send messages to the type-1 and type-2 check nodes in parallel (later in Section~\ref{Sec:N_JI}, we change the schedule from flooding to be more complexity aware). 
Consider a random instance from the bilayer ensemble characterized by block length $ n $ and degree distributions $\Lambda^{(1)},\Lambda^{(2)},\Omega^{(1)},\Omega^{(2)}$ (recall that $\Lambda^{(2)}$ includes $P_0$). Let $x_l(\epsilon)$ and $y_l(\epsilon)$ denote the probability that a type-1 and type-2 edge, respectively, carry a variable-to-check erasure message after $l$ BP iterations over the BEC($\epsilon$) as $n \to \infty$. In view of the MET density-evolution equations in \cite{RichUrb08}, we have
\begin{subequations}
	\begin{align}
	\label{eq:2D DE1}
	&x_l(\epsilon) = \epsilon \cdot \lambda^{(1)}\left(1- \rho^{(1)}\left(1-x_{l-1}(\epsilon) \right)\right)\cdot \Lambda^{(2)}\left(1- \rho^{(2)}\left(1-y_{l-1}(\epsilon) \right)\right), \quad l \geq 0,\\
	\label{eq:2D DE2}
	&y_l(\epsilon) = \epsilon \cdot \Lambda^{(1)}\left(1- \rho^{(1)}\left(1-x_{l-1}(\epsilon) \right)\right) \cdot \lambda^{(2)}\left(1- \rho^{(2)}\left(1-y_{l-1} (\epsilon) \right)\right),\quad l \geq 0,\\
	\label{eq:2D DE3}
	&x_{-1}(\epsilon)=y_{-1}(\epsilon)=1.
	\end{align}
\end{subequations}
Figure~\ref{Fig:DE Eq} graphically illustrates equations \eqref{eq:2D DE1}--\eqref{eq:2D DE2}: in the center diagram the right outgoing edge carries the message in \eqref{eq:2D DE1} to a type-1 check and the left outgoing edge carries the message in \eqref{eq:2D DE2} to a type-2 check. 
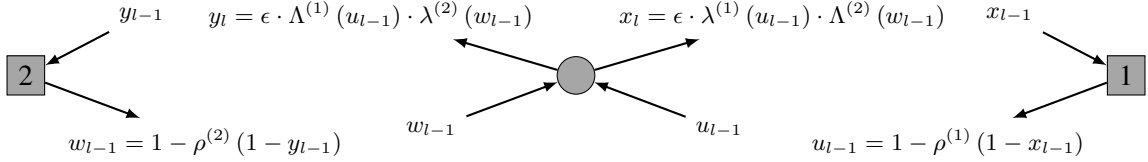
\begin{figure}[!h]
	
	\begin{center}
		\begin{tikzpicture}[scale=0.3,>=latex]\label{Tikz:DE eq}
		\tikzstyle{cnode}=[rectangle,draw,fill=gray!70!white,minimum size=5mm]
		\tikzstyle{vnode}=[circle,draw,fill=gray!70!white,minimum size=5mm]
		\pgfmathsetmacro{\x}{17}
		\pgfmathsetmacro{\w}{4*\x}
		\pgfmathsetmacro{\y}{4}
		
		\node[cnode] (cJ) {2};
		\node [above right = 3mm and 0.5*\x mm of cJ] (incJ) {\footnotesize $ y_{l-1} $};
		\node [below right = 3mm and 0.1*\x mm of cJ] (outcJ) {\footnotesize $ w_{l-1}=1-\rho^{(2)}\left (1-y_{l-1}\right ) $};
		\draw[->,thick] (incJ)--(cJ) ;
		\draw[->,thick] (cJ)--(outcJ) ;
		
		\node[vnode,right=\w mm of cJ] (v) {};
		\node [above right = 3mm and 0.15*\x mm of v]  (outv2) {\footnotesize $ x_{l}=\epsilon\cdot\lambda^{(1)}\left (u_{l-1} \right )\cdot\Lambda^{(2)}\left (w_{l-1} \right)$};
		\node [below left = 3mm and 0.75*\x mm of v] (invJ) {\footnotesize $ w_{l-1} $};
		\node [above left = 3mm and 0.15*\x mm of v] (outv1) {\footnotesize $ y_{l}=\epsilon\cdot\Lambda^{(1)}\left (u_{l-1} \right )\cdot\lambda^{(2)}\left (w_{l-1} \right)$};
		\node [below right = 3mm and 0.75*\x mm of v] (invL) {\footnotesize $ u_{l-1} $};;
		\draw[->,thick] (invL)--(v) ;
		\draw[->,thick] (invJ)--(v) ;
		\draw[->,thick] (v)--(outv1) ;
		\draw[->,thick] (v)--(outv2) ;
		
		\node[cnode,right=\w mm of v] (cL) {1};
		\node [above left = 3mm and 0.5*\x mm of cL] (incL) {\footnotesize $ x_{l-1} $};
		\node [below left = 3mm and 0.1*\x mm of cL] (outcL) {\footnotesize $ u_{l-1}=1-\rho^{(1)}\left (1-x_{l-1}\right ) $};
		\draw[->,thick] (incL)--(cL);
		\draw[->,thick] (cL)--(outcL) ;
		
		\end{tikzpicture}
	\end{center}
	
	\caption{\label{Fig:DE Eq} Illustration of the DE equations \eqref{eq:2D DE1}--\eqref{eq:2D DE2}.}
	
\end{figure}
To simplify notations, $\epsilon$ will be omitted from now on if it is clear from the context. 
\begin{remark} \label{remark:asymmetry}
	Although $x_l$ and $y_l$ in \eqref{eq:2D DE1}-\eqref{eq:2D DE2} seem symmetric to each other, it is not necessarily true since we allow variable nodes to have type-2 degrees $0$ ($P_0>0$) or $1$ ($\lambda^{(2)}(0)>0$), while their type-1 degrees are forced to be strictly greater then $1$. This asymmetry has a crucial effect on the decoding process which is explained and detailed later in this section.
	Symmetry does hold in the special and less interesting case where $\rho^{(1)}(x)=\rho^{(2)}(x)$ and $\lambda^{(1)}(x)=x^{l_1-1},\lambda^{(2)}(x)=x^{l_2-1}$, in which case, for every iteration $ l\geq 0 $, $y_l =x_l=\epsilon \lambda(1-\rho(1-x_{l-1})),$ where, $\rho(x)\triangleq\rho^{(1)}(x)$, $\lambda(x)\triangleq x^{l_1+l_2-1}$. Thus, if we use identical degree-distributions for type-1 and type-2 check nodes and we force all variable nodes to have regular degrees, then the 2D-DE equations in \eqref{eq:2D DE1}-\eqref{eq:2D DE2} degenerate to the already known 1D-DE equation. However, codes falling under this special case are less interesting because they are sub-optimal in their rates and restricted in their thresholds. 
\end{remark}
Define 
\begin{subequations}
\begin{align} 
\label{eq:f}
f(\epsilon,x,y)=\epsilon\cdot\lambda^{(1)}\left(1- \rho^{(1)}\left(1-x \right)\right) \Lambda^{(2)}\left(1- \rho^{(2)}\left(1-y\right)\right),\quad x,y,\epsilon \in [0,1] \\
\label{eq:g}
g(\epsilon,x,y)=\epsilon\cdot\Lambda^{(1)}\left(1- \rho^{(1)}\left(1-x \right)\right)  \lambda^{(2)}\left(1- \rho^{(2)}\left(1-y\right)\right) ,\quad x,y,\epsilon \in [0,1]
\end{align}
\end{subequations}
such that \eqref{eq:2D DE1}-\eqref{eq:2D DE3} can be re-written as
\begin{subequations}
\begin{align} 
\label{eq:2D-DE1}
&x_l = f\left(\epsilon,x_{l-1},y_{l-1}\right) ,\quad l \geq 0\\
\label{eq:2D-DE2}
&y_l = g\left(\epsilon,x_{l-1},y_{l-1}\right) ,\quad l \geq 0\\
\label{eq:2D-DE3}
&x_{-1} =y_{-1} =1.
\end{align}
\end{subequations}

\begin{lemma} \label{lemma:monotonicity}
The functions $f$ and $g$ are monotonically non-decreasing in all of their variables.
\end{lemma}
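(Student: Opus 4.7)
The plan is to observe that every one of the degree-distribution polynomials $\lambda^{(1)},\lambda^{(2)},\rho^{(1)},\rho^{(2)},\Lambda^{(2)},\Lambda^{(1)}$ arises from a degree distribution, hence has non-negative coefficients, and therefore is monotonically non-decreasing on $[0,1]$. Once this is in hand, monotonicity of $f$ and $g$ reduces to a short chain-of-compositions argument together with the fact that a product of non-negative non-decreasing functions is itself non-decreasing.

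More concretely, I would argue the three variables one at a time. For $\epsilon$, both $f$ and $g$ are simply linear in $\epsilon$ with a non-negative coefficient (the coefficient is a product of polynomial values on $[0,1]$, all of which are non-negative), so monotonicity in $\epsilon$ is immediate. For $x$, I would track the chain $x \mapsto 1-x \mapsto \rho^{(1)}(1-x) \mapsto 1-\rho^{(1)}(1-x)$: the first map is decreasing, the second is decreasing (as $\rho^{(1)}$ is non-decreasing), and the third is decreasing, so the overall composition of three decreasing maps is decreasing; applying the non-decreasing $\lambda^{(1)}$ (for $f$) or $\Lambda^{(1)}$ (for $g$) reverses the sense one more time, giving a non-decreasing function of $x$. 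The analogous argument in $y$, using $\rho^{(2)}$ and $\Lambda^{(2)}$ (for $f$) or $\lambda^{(2)}$ (for $g$), yields monotonicity in $y$. Finally, because each factor in $f$ and in $g$ is non-negative on $[0,1]^3$ and non-decreasing in the variable of interest (and constant in the others), the product is non-decreasing in each variable separately.

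No step here is a genuine obstacle; the only thing to be careful about is to state explicitly that all the polynomials involved map $[0,1]$ into $[0,1]$ with non-negative values, so that multiplicativity of monotonicity can be invoked without worrying about sign flips. I would place this as a one-line observation at the start of the proof, citing the definitions in Section~\ref{sub:LDPC}--\ref{sub:Tanner} which make clear that all coefficients $\lambda^{(i)}_k,\rho^{(i)}_k,\Lambda^{(i)}_k$ are non-negative and sum to at most one. With that established, the rest is a routine composition/product argument and can be written very compactly.
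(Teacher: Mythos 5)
Your argument is correct and is essentially the same as the paper's, which observes that the degree-distribution polynomials are non-negative and non-decreasing on $[0,1]$, deduces monotonicity in $\epsilon$ immediately, and notes that the cases of $x$ and $y$ are "similar." You simply spell out the composition chain (decreasing composed with decreasing an even number of times) and the product-of-non-negative-non-decreasing-functions step that the paper leaves implicit.
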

\begin{proof}
Since the images of $\lambda^{(1)}(\cdot),\Lambda^{(1)}(\cdot),\lambda^{(2)}(\cdot),\Lambda^{(2)}(\cdot),\rho^{(1)}(\cdot)$ and $\rho^{(1)}(\cdot)$ lie in $[0,1]$, then $f$ and $g$ are monotonically non-decreasing in $\epsilon \in [0,1]$. The proof for $ x,y $ is similar.
\end{proof}

\begin{definition}
Let $\epsilon \in (0,1)$. We say that $(x,y) \in [0,1]^2$ is an $(f,g)$-fixed point if
\begin{align} \label{eq:f,g fixed}
\begin{pmatrix}
x\\y
\end{pmatrix}
=
\begin{pmatrix}
f(\epsilon,x,y)\\
g(\epsilon,x,y)
\end{pmatrix}.
\end{align}
\end{definition}

Clearly, for every $\epsilon \in (0,1)$, $(x,y)=(0,0)$ is a trivial $(f,g)$-fixed point. However, it is not clear yet if there exists a non-trivial $(f,g)$-fixed point. In particular, we ask: for which choices of $\epsilon,\; \lambda^{(1)},\rho^{(1)},\lambda^{(2)},\rho^{(2)}$ and $P_0$ there exists a non-trivial $(f,g)$-fixed point? The following lemmas help answering this question.

\begin{lemma} \label{lemma:fix pt}
Let $\epsilon \in (0,1)$, and let $(x,y) \in [0,1]^2$ be an $(f,g)$-fixed point. Then,
\begin{enumerate}
\item \label{item: fix pt lemma0} $x=0$ implies $y =0$, and if $P_0=0$ or $\lambda^{(2)}(0)>0$, then $y=0$ implies $x =0$.
\item \label{item: fix pt lemma1} $(x,y) \in [0,\epsilon)^2$.
\item \label{item: fix pt lemma2} If $\{x_{l}\}_{l=0}^\infty$ and $\{y_{l}\}_{l=0}^\infty$ are defined by \eqref{eq:2D-DE1}--\eqref{eq:2D-DE3}, then
\begin{align} \label{eq:no jumps}
x_{l} \geq x , \quad y_{l} \geq y,\quad \forall l\geq 0.
\end{align}
\end{enumerate}
\end{lemma}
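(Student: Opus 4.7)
The approach is to exploit three polynomial facts: $\rho^{(1)}(1)=\rho^{(2)}(1)=1$; the constraints $\Lambda^{(1)}_0=\Lambda^{(1)}_1=0$ (which also force $\lambda^{(1)}_1=0$); and the fact that each degree-distribution polynomial here is non-decreasing on $[0,1]$ with value $1$ at $z=1$. Item~\ref{item: fix pt lemma2} will then follow from Lemma~\ref{lemma:monotonicity} by a short induction, while Items~\ref{item: fix pt lemma0} and~\ref{item: fix pt lemma1} are direct algebraic consequences of these three facts.

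For Item~\ref{item: fix pt lemma0}, substituting $x=0$ into the fixed-point equation $y=g(\epsilon,0,y)$ yields $y=\epsilon\,\Lambda^{(1)}(1-\rho^{(1)}(1))\,\lambda^{(2)}(1-\rho^{(2)}(1-y))=\epsilon\,\Lambda^{(1)}(0)\,\lambda^{(2)}(\cdot)=0$, using $\rho^{(1)}(1)=1$ and $\Lambda^{(1)}(0)=\Lambda^{(1)}_0=0$. For the reverse direction, $y=0$ forces $x=f(\epsilon,x,0)=\epsilon\,\lambda^{(1)}(\cdot)\,\Lambda^{(2)}(0)=\epsilon\,\lambda^{(1)}(\cdot)\,P_0$, so if $P_0=0$ we are done; otherwise, the companion equation $0=g(\epsilon,x,0)=\epsilon\,\Lambda^{(1)}(1-\rho^{(1)}(1-x))\,\lambda^{(2)}(0)$ combined with $\lambda^{(2)}(0)>0$ forces $\Lambda^{(1)}(1-\rho^{(1)}(1-x))=0$; since $\Lambda^{(1)}_0=\Lambda^{(1)}_1=0$ imply $\Lambda^{(1)}(z)>0$ for every $z\in(0,1]$, we obtain $\rho^{(1)}(1-x)=1$, and strict monotonicity of $\rho^{(1)}$ then gives $x=0$.

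For Item~\ref{item: fix pt lemma1}, I first rule out $x=1$ (and symmetrically $y=1$): if $x=1$, then $1=\epsilon\,\lambda^{(1)}(\cdot)\,\Lambda^{(2)}(\cdot)\leq\epsilon<1$, a contradiction, so $x,y\in[0,1)$. Writing $u=1-\rho^{(1)}(1-x)$ and $w=1-\rho^{(2)}(1-y)$, the inequality $x<1$ combined with the fact that $\rho^{(1)}$ has at least one positive coefficient and $1-x>0$ gives $\rho^{(1)}(1-x)>0$, hence $u<1$. Because $\lambda^{(1)}_1=0$, the polynomial $\lambda^{(1)}$ is non-constant with $\lambda^{(1)}(1)=1$, so $\lambda^{(1)}(u)<1$, and analogously $\Lambda^{(1)}(u)<1$ (since $\Lambda^{(1)}_0=\Lambda^{(1)}_1=0$). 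Bounding the remaining factors $\Lambda^{(2)}(w),\lambda^{(2)}(w)$ by $1$ yields $x=\epsilon\,\lambda^{(1)}(u)\,\Lambda^{(2)}(w)<\epsilon$ and $y=\epsilon\,\Lambda^{(1)}(u)\,\lambda^{(2)}(w)<\epsilon$.

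For Item~\ref{item: fix pt lemma2}, induct on $l$. The base $l=0$ uses $x_{-1}=y_{-1}=1\geq x,y$ together with Lemma~\ref{lemma:monotonicity} to get $x_0=f(\epsilon,1,1)\geq f(\epsilon,x,y)=x$ and likewise $y_0\geq y$. The inductive step is identical: assuming $x_{l-1}\geq x$ and $y_{l-1}\geq y$, monotonicity of $f,g$ gives $x_l\geq x$ and $y_l\geq y$. The main obstacle in the whole lemma is the asymmetric second direction of Item~\ref{item: fix pt lemma0}: the disjunctive hypothesis ``$P_0=0$ or $\lambda^{(2)}(0)>0$'' is exactly what is needed because, as Remark~\ref{remark:asymmetry} anticipates, the type-2 layer permits degree-$0$ and degree-$1$ variable nodes, so $y=0$ alone does not propagate back to $x=0$ without additional structure on $\Lambda^{(2)}$ or $\lambda^{(2)}$.
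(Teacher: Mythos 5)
Your proof is correct and follows essentially the same route as the paper's (Appendix~A): the forward direction of Item~1 uses $\Lambda^{(1)}(0)=0$ in the $g$-equation, the reverse direction splits on $P_0=0$ versus $\lambda^{(2)}(0)>0$ and, in the latter case, extracts $\rho^{(1)}(1-x)=1$ from the $g$-equation, and Item~3 is the same monotonicity induction via Lemma~\ref{lemma:monotonicity}. Two minor remarks: you write $\Lambda^{(1)}$ where the paper's appendix has a small typo writing $\lambda^{(1)}$ (your version is the intended one since $g$ involves $\Lambda^{(1)}$); and your Item~2 argument spells out the strict inequality (ruling out $x=1,y=1$ and then using $\lambda^{(1)}_1=\Lambda^{(1)}_1=\Lambda^{(1)}_0=0$ to get $\lambda^{(1)}(u),\Lambda^{(1)}(u)<1$), whereas the paper only asserts it ``follows immediately''; your base case for Item~3 also avoids relying on $x_0=y_0=\epsilon$ by instead invoking monotonicity of $f,g$ from $(1,1)$, which is marginally more robust but substantively identical.
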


\begin{proof}
See Appendix~\ref{App:fix pt}.
\end{proof}

\begin{remark} \label{remark: noting asymmetry}
Item 1 in Lemma~\ref{lemma:fix pt} expresses the asymmetry (discussed in Remark~\ref{remark:asymmetry}) between the type-1 and type-2 densities. Note that the ensembles where symmetry does \emph{not} hold are those that have $\Lambda^{(2)}_0>0$ and $\Lambda^{(2)}_1=0$, a common combination in our constructions later in the paper.
\end{remark}

\begin{lemma} \label{lemma:mono of DE}
Let $x_{l} $ and $y_{l}$ be defined by \eqref{eq:2D-DE1}--\eqref{eq:2D-DE3} and let $0<\epsilon \leq \epsilon' < 1$. Then,
\begin{subequations}
\begin{align} \label{eq:mono of DE1}
\begin{split}
x_{l+1}(\epsilon) \leq x_{l}(\epsilon) ,\quad y_{l+1}(\epsilon) \leq y_{l}(\epsilon) , \quad \forall l \geq 0,
\end{split}
\end{align}
and
\begin{align} \label{eq:mono of DE2}
\begin{split}
x_{l}(\epsilon) \leq x_{l}(\epsilon') , \quad y_{l}(\epsilon) \leq y_{l}(\epsilon') , \quad \forall l \geq 0.
\end{split}
\end{align}
\end{subequations}
\end{lemma}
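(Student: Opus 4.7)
The plan is to establish both inequalities by straightforward induction on $l$, with Lemma~\ref{lemma:monotonicity} (monotonicity of $f$ and $g$ in each argument) as the sole engine. No estimates or fixed-point arguments are needed; the structure of \eqref{eq:2D-DE1}--\eqref{eq:2D-DE3} together with componentwise monotonicity does all the work.

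For \eqref{eq:mono of DE1}, the base step is immediate: since $f,g$ take values in $[0,1]$, we have $x_0 = f(\epsilon,1,1) \leq 1 = x_{-1}$ and $y_0 = g(\epsilon,1,1) \leq 1 = y_{-1}$. For the inductive step, assume $x_l \leq x_{l-1}$ and $y_l \leq y_{l-1}$. Applying Lemma~\ref{lemma:monotonicity} to $f$ in its $x$- and $y$-arguments (with $\epsilon$ fixed) gives
\begin{equation*}
x_{l+1} = f(\epsilon,x_l,y_l) \leq f(\epsilon,x_{l-1},y_{l-1}) = x_l,
\end{equation*}
and the analogous inequality for $y_{l+1} \leq y_l$ follows by the same argument applied to $g$.

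For \eqref{eq:mono of DE2}, I would again induct on $l$, this time invoking monotonicity in all three arguments simultaneously. The base case uses only monotonicity in $\epsilon$: $x_0(\epsilon) = f(\epsilon,1,1) \leq f(\epsilon',1,1) = x_0(\epsilon')$, and similarly for $y_0$. For the inductive step, assuming $x_l(\epsilon) \leq x_l(\epsilon')$ and $y_l(\epsilon) \leq y_l(\epsilon')$, three consecutive applications of Lemma~\ref{lemma:monotonicity} (to each argument of $f$) yield
\begin{equation*}
x_{l+1}(\epsilon) = f\bigl(\epsilon,x_l(\epsilon),y_l(\epsilon)\bigr) \leq f\bigl(\epsilon',x_l(\epsilon'),y_l(\epsilon')\bigr) = x_{l+1}(\epsilon'),
\end{equation*}
and analogously for $y_{l+1}$.

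There is essentially no hard step here; the only point requiring mild care is bookkeeping the initial condition at $l=-1$ so that the base case of \eqref{eq:mono of DE1} lines up with the statement's range $l \geq 0$. The entire proof fits in a few lines once Lemma~\ref{lemma:monotonicity} is in hand.
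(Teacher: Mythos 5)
Your proof is correct and is exactly the argument the paper intends: the paper's proof of this lemma consists of the single line ``By mathematical induction on $l$ and by Lemma~\ref{lemma:monotonicity},'' and your write-up simply supplies the base cases at $l=-1$ (using $x_{-1}=y_{-1}=1$ and the fact that $f,g$ map into $[0,1]$) and the inductive steps via componentwise monotonicity of $f$ and $g$. Nothing is missing and nothing is done differently.
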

\begin{proof}
By mathematical induction on $l$ and by Lemma~\ref{lemma:monotonicity}.
\end{proof}

In view of \eqref{eq:2D-DE1}--\eqref{eq:2D-DE3}, it can be verified that for every iteration $ l\geq 0 $, $x_l(0) = y_l(0) = 0 ,\; x_l(1) = y_l(1) = 1$. Since $x_l$ and $y_l$ are bounded from below by $0$, then  Lemma~\ref{lemma:mono of DE} implies that the limits $\lim_{l \to \infty}x_l(\epsilon)$ and $\lim_{l \to \infty}y_l(\epsilon)$ exist. Thus we can define a decoding threshold by
\begin{align} \label{eq:th op def A}
\epsilon^*_2 = \sup \left\{\epsilon \in [0,1] \colon \lim\limits_{l \to \infty}y_l(\epsilon)=\lim\limits_{l \to \infty}x_l(\epsilon)=0 \right\}.
\end{align}
Note that from the continuity of $g$ in \eqref{eq:g}, item~\ref{item: fix pt lemma0} in Lemma~\ref{lemma:fix pt} implies that if $\lim_{l \to \infty}x_l(\epsilon)=0 $, then  $\lim_{l \to \infty}y_l(\epsilon)=0 $. Thus, \eqref{eq:th op def A} can be re-written as
\begin{align} \label{eq:th op def}
\epsilon^*_2 = \sup \left\{\epsilon \in [0,1] \colon \lim\limits_{l \to \infty}x_l(\epsilon)=0 \right\}.
\end{align}

\begin{theorem}
\label{theorem:fix pt char}
$\epsilon^*_2 = \sup \left\{ \epsilon \in [0,1]\colon \eqref{eq:f,g fixed}\text{ has no solution with }(x,y) \in (0,1]\times[0,1]\right\}$.
\end{theorem}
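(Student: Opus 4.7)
The plan is to prove the equivalence by showing that the two sets inside the supremums actually coincide. Let $S_1 = \{\epsilon \in [0,1] : \lim_{l\to\infty} x_l(\epsilon)=0\}$, so that $\epsilon^*_2 = \sup S_1$ by the definition in \eqref{eq:th op def}, and let $S_2 = \{\epsilon \in [0,1] : \text{\eqref{eq:f,g fixed} has no solution } (x,y)\in(0,1]\times[0,1]\}$. I will show $S_1 = S_2$, which immediately yields $\sup S_1 = \sup S_2$.

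For the inclusion $S_1 \subseteq S_2$, I argue by contrapositive. Suppose $\epsilon \notin S_2$, so there exists an $(f,g)$-fixed point $(x,y)$ with $x>0$. Then item~\ref{item: fix pt lemma2} of Lemma~\ref{lemma:fix pt} gives $x_l(\epsilon)\geq x>0$ for every $l\geq 0$, hence $\lim_{l\to\infty} x_l(\epsilon)\geq x > 0$, so $\epsilon \notin S_1$. This step uses only the already-established monotonicity/fixed-point stability from Lemma~\ref{lemma:fix pt}.

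For the inclusion $S_2 \subseteq S_1$, I exploit the fact that Lemma~\ref{lemma:mono of DE} makes $\{x_l(\epsilon)\}$ and $\{y_l(\epsilon)\}$ non-increasing sequences bounded below by $0$, so both converge to limits $x^*,y^*\in[0,1]$. Since $f$ and $g$ in \eqref{eq:f}--\eqref{eq:g} are polynomials (hence continuous in $(x,y)$), passing to the limit in \eqref{eq:2D-DE1}--\eqref{eq:2D-DE2} shows that $(x^*,y^*)$ is an $(f,g)$-fixed point. Now assume $\epsilon\in S_2$: then the fixed point $(x^*,y^*)$ must lie outside $(0,1]\times[0,1]$, which forces $x^* = 0$. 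Finally, item~\ref{item: fix pt lemma0} of Lemma~\ref{lemma:fix pt} (the direction ``$x=0\Rightarrow y=0$'', which holds unconditionally) gives $y^*=0$. Hence $\lim_{l\to\infty} x_l(\epsilon)=0$, i.e., $\epsilon \in S_1$.

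No step here is especially delicate; the only subtlety I want to flag is the asymmetry noted in Remark~\ref{remark: noting asymmetry}. I only need the easy direction of item~\ref{item: fix pt lemma0} of Lemma~\ref{lemma:fix pt} (from $x=0$ to $y=0$), so the result goes through even when $P_0>0$ and $\lambda^{(2)}(0)=0$, which is precisely the regime used in later constructions. Thus the main ``obstacle'' is essentially bookkeeping: making sure the sets $S_1$ and $S_2$ are set up so that the monotone-convergence-to-a-fixed-point argument in one direction, and the stability-of-fixed-points argument in the other direction, line up exactly with the lemmas already proved.
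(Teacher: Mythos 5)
Your proof is correct and uses essentially the same two ingredients as the paper's: monotone convergence of $(x_l,y_l)$ to an $(f,g)$-fixed point (for $S_2\subseteq S_1$), and the stability statement $x_l\geq x$ from item~\ref{item: fix pt lemma2} of Lemma~\ref{lemma:fix pt} (for $S_1\subseteq S_2$). The only genuine difference is packaging: the paper argues the two inequalities $\hat{\epsilon}\leq\epsilon_2^*$ and $\hat{\epsilon}\geq\epsilon_2^*$ separately, and the first of these implicitly uses that the set $S_2$ is downward-closed (so that $\epsilon<\hat{\epsilon}$ really does imply there is no nontrivial fixed point at $\epsilon$); your route of proving $S_1=S_2$ outright makes that point moot, which is marginally cleaner. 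Also note that the final ``$x^*=0\Rightarrow y^*=0$'' step is not actually needed, since $\epsilon_2^*$ is already defined via \eqref{eq:th op def} in terms of $x_l$ alone.
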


\begin{proof}
Mark $\hat{\epsilon}=\sup \left\{ \epsilon \in [0,1]\colon \eqref{eq:f,g fixed}\text{ has no solution with }(x,y) \in (0,1]\times[0,1]\right\}$, let $\epsilon<\hat{\epsilon}$, and let $x(\epsilon)=\lim_{l \to \infty}x_l(\epsilon),\;y(\epsilon)=\lim_{l \to \infty}y_l(\epsilon)$. Taking the limit $l \to \infty$ in \eqref{eq:2D-DE1}--\eqref{eq:2D-DE3} yields that $(x(\epsilon),y(\epsilon))$ is an $(f,g)$-fixed point. Since $\epsilon<\hat{\epsilon}$, it follows that $x(\epsilon)=0$. From \eqref{eq:th op def} we have  $\epsilon < \epsilon^*_2$, for every $\epsilon<\hat{\epsilon}$; this implies that $\hat{\epsilon} \leq \epsilon^*_2$. 

For the other direction, let $\epsilon>\hat{\epsilon}$ and let $(z_1,z_2)$ be an $(f,g)$-fixed point such that $z_1>0$. Lemma~\ref{lemma:fix pt}-item~\ref{item: fix pt lemma2} implies that for every iteration $ l $, $x_l(\epsilon) \geq z_1 >0,\quad \forall l \geq 0$, thus $\lim_{l \to \infty}x_l(\epsilon) >0$, where the existence of this limit is assured due to Lemma~\ref{lemma:mono of DE}; hence, $\epsilon > \epsilon^*_2$. Since this is true for all $\epsilon>\hat{\epsilon}$, then we deduce that $\hat{\epsilon} \geq \epsilon^*_2$ and complete the proof.
\end{proof}

We proceed by providing a numerical way to calculate the threshold of a given set of degree distributions. Define
\begin{align} \label{eq:q_L q_J}
q_1(x)\triangleq x\cdot \frac{\Lambda^{(1)}\left( 1- \rho^{(1)}\left(1-x \right)\right)}{\lambda^{(1)}\left( 1- \rho^{(1)}\left(1-x \right)\right)},\quad q_2(x)\triangleq x\cdot \frac{\Lambda^{(2)}\left( 1- \rho^{(2)}\left(1-x \right)\right)}{\lambda^{(2)}\left( 1- \rho^{(2)}\left(1-x \right)\right)},\quad x \in (0,1].
\end{align}

\begin{lemma} \label{lemma:q_L(0)}
$\lim_{x \to 0}q_1(x)=0$.
\end{lemma}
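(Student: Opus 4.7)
The plan is to exploit the fact that $\Lambda^{(1)}_0 = \Lambda^{(1)}_1 = 0$ (stated in Section~\ref{sub:Tanner} because layer 1 is required to decode on its own), so that the numerator $\Lambda^{(1)}(\cdot)$ vanishes to one higher order than the denominator $\lambda^{(1)}(\cdot)$ near zero. Combined with the prefactor $x$ in the definition of $q_1$, this should give a double vanishing that drives $q_1(x)$ to zero.

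Concretely, I would first substitute $z \triangleq 1-\rho^{(1)}(1-x)$ and observe that $z \to 0$ as $x\to 0$, since $\rho^{(1)}(1)=1$ (the edge-perspective check distribution is a probability polynomial evaluated at $1$). Next, using $\Lambda^{(1)}_0=\Lambda^{(1)}_1=0$, I would write the ratio as
\begin{equation*}
\frac{\Lambda^{(1)}(z)}{\lambda^{(1)}(z)}
\;=\;\frac{\sum_{i\geq 2}\Lambda^{(1)}_{i}\,z^{i}}{\sum_{i\geq 2}\lambda^{(1)}_{i}\,z^{i-1}}
\;=\;z\cdot\frac{\sum_{i\geq 2}\Lambda^{(1)}_{i}\,z^{i-2}}{\sum_{i\geq 2}\lambda^{(1)}_{i}\,z^{i-2}},
\end{equation*}
which factors out a $z$ and leaves a ratio whose limit as $z\to 0$ equals $\Lambda^{(1)}_{i^{*}}/\lambda^{(1)}_{i^{*}}$, with $i^{*}\geq 2$ the smallest type-1 variable degree with nonzero mass (this is a finite constant because $\lambda^{(1)}_{i^{*}}>0$ whenever $\Lambda^{(1)}_{i^{*}}>0$). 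Therefore $\Lambda^{(1)}(z)/\lambda^{(1)}(z)=O(z)$ as $z\to 0$, and by composition $\Lambda^{(1)}(1-\rho^{(1)}(1-x))/\lambda^{(1)}(1-\rho^{(1)}(1-x))\to 0$ as $x\to 0$.

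Finally I would multiply by the external $x$: since both the factor $x$ and the ratio tend to $0$ (in fact $z$ itself is $O(x)$ by differentiability of $\rho^{(1)}$ at $1$, giving $q_1(x) = O(x^2)$ — though this stronger statement is not needed), their product satisfies $\lim_{x\to 0}q_1(x)=0$, completing the proof. There is no real obstacle here; the only subtlety is remembering to invoke the constraint $\Lambda^{(1)}_0=\Lambda^{(1)}_1=0$ (rather than trying to prove an analogous statement for $q_2$, which would fail when $P_0>0$ or $\lambda^{(2)}(0)>0$ — precisely the asymmetry highlighted in Remark~\ref{remark:asymmetry} and Remark~\ref{remark: noting asymmetry}).
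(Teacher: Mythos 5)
Your proof is correct and follows essentially the same route as the paper's: both exploit $\Lambda^{(1)}_0=\Lambda^{(1)}_1=0$ to show $\Lambda^{(1)}(z)/\lambda^{(1)}(z)=O(z)$ via a leading-order cancellation, then substitute $z=1-\rho^{(1)}(1-x)\to 0$ and multiply by the external $x$. The paper just writes the factoring directly in terms of the minimal nonzero index $I$, whereas you extract a single power of $z$ from the $i\geq 2$ sums and then observe that the residual ratio has limit $\Lambda^{(1)}_{i^*}/\lambda^{(1)}_{i^*}$ — the same step in slightly different notation.
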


\begin{proof}
See Appendix~\ref{App:q_L(0)}.
\end{proof}

Since $q_1(1)=1$, Lemma~\ref{lemma:q_L(0)} and the intermediate-value theorem imply that for every $w \in (0,1]$, there exists $x\in (0,1]$ such that $q_1(x)=w$. Note that it is not true in general that $\lim_{x \to 0}q_2(x)=0$ (another evidence of asymmetry); this limit may be infinite (for example the case $P_0>0$, $\rho^{(2)}(x)=x^3$ and $\lambda^{(2)}(x)=x^2$). 

\begin{definition} \label{def:q}
For every $y>0$ such that $q_2(y) \leq 1$ define
\begin{align} \label{eq:q def}
q(y)\triangleq\max \{ x \colon q_1(x)=q_2(y)\}.
\end{align}
\end{definition}

\begin{theorem}\label{th:numerical th}
Let $\lambda^{(1)},\rho^{(1)},\lambda^{(2)},\rho^{(2)}$ be degree-distribution polynomials, let $P_0 \in [0,1]$, and let $\epsilon^*_2$ be the decoding threshold of the bilayer ensemble characterized by these degree distributions as $ n\to\infty $. \\
If $P_0=0$ or $\lambda^{(2)}(0) >0$ , then
\begin{align} \label{eq:numerical th1}
\epsilon^*_2 =  \inf\limits_{\substack{y \in (0,1] \\ q_2(y)\leq 1}} \frac{y}{g(1,q(y),y)}.
\end{align}
Else,
\begin{align} \label{eq:numerical th2}
\epsilon^*_2 = \min\left\{ \inf\limits_{\substack{y \in (0,1] \\ q_2(y)\leq 1}} \frac{y}{g(1,q(y),y)},\;\;\frac1{P_0}\cdot \inf_{(0,1]}\frac{x}{\lambda^{(1)}\left(1- \rho^{(1)}\left(1-x \right)\right)}\right\}.
\end{align}
\end{theorem}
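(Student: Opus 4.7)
\medskip

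The plan is to use Theorem~\ref{theorem:fix pt char} to translate the threshold into an infimum over the set of non-trivial $(f,g)$-fixed points, and then parameterize that set using the auxiliary functions $q_1,q_2,q$. First, combining Theorem~\ref{theorem:fix pt char} with Lemma~\ref{lemma:mono of DE} (monotonicity of the DE in $\epsilon$) and Lemma~\ref{lemma:fix pt}-item~\ref{item: fix pt lemma2}, I would argue that the set of $\epsilon\in(0,1]$ admitting a non-trivial fixed point is an upper interval, so
\begin{align*}
\epsilon_2^* \;=\; \inf\bigl\{\epsilon\in(0,1] : \eqref{eq:f,g fixed}\text{ has a solution with }(x,y)\in(0,1]\times[0,1]\bigr\}.
\end{align*}

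Next, I would split such fixed points into two cases. Case~A: $y>0$, which by Lemma~\ref{lemma:fix pt}-item~\ref{item: fix pt lemma0} forces $x>0$. Dividing the first equation of \eqref{eq:f,g fixed} by the second (both non-zero) eliminates $\epsilon$ and gives exactly $q_1(x)=q_2(y)$, with $q_2(y)\le 1$ because $q_1$ maps $(0,1]$ into $[0,1]$ (here I would use Lemma~\ref{lemma:q_L(0)} and the observation $q_1(1)=1$ to note the feasibility of $q_2(y)\le q_1(1)$). Solving the second equation for $\epsilon$ yields the fixed-point value $\epsilon=y/g(1,x,y)$. Since $\Lambda^{(1)}(1-\rho^{(1)}(1-x))$ is non-decreasing in $x$, so is $g(1,x,y)$ in $x$ at fixed $y$, hence for fixed $y$ the minimum of $\epsilon$ over the level set $\{x:q_1(x)=q_2(y)\}$ is attained at the \emph{largest} such $x$, namely $q(y)$. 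Case~B: $y=0$ with $x>0$; by Lemma~\ref{lemma:fix pt}-item~\ref{item: fix pt lemma0} this can occur only when $P_0>0$ and $\lambda^{(2)}(0)=0$. Plugging $y=0$ into \eqref{eq:f,g fixed} makes the second equation automatic (the factor $\lambda^{(2)}(0)$ vanishes) and reduces the first to $x=\epsilon P_0\lambda^{(1)}(1-\rho^{(1)}(1-x))$, whence $\epsilon=\tfrac{1}{P_0}\cdot x/\lambda^{(1)}(1-\rho^{(1)}(1-x))$.

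Taking infima in each case and combining, the infimum over Case~A yields the first term in \eqref{eq:numerical th2}, while Case~B yields the second term. When $P_0=0$ or $\lambda^{(2)}(0)>0$, Case~B is vacuous (under $\lambda^{(2)}(0)>0$ the second equation at $y=0$ forces $\Lambda^{(1)}(1-\rho^{(1)}(1-x))=0$, i.e.\ $x=0$, since $\Lambda^{(1)}_0=\Lambda^{(1)}_1=0$), so only the Case~A infimum remains, giving \eqref{eq:numerical th1}. A small boundary check shows that the infimum in Case~A can be restricted to $y\in(0,1]$ with $q_2(y)\le 1$ (values with $q_2(y)>1$ admit no matching $x$, and $y\to 0$ corresponds to the trivial fixed point, ruled out by $x>0$).

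The main obstacle is the non-monotonicity of $q_1$: the level set $\{x:q_1(x)=q_2(y)\}$ may be non-singleton, so Definition~\ref{def:q} takes the maximum, and I need to rigorously justify that doing so yields the \emph{infimum} rather than missing a lower value. The resolution is precisely the monotonicity of $g(1,\cdot,y)$ in $x$ noted above, together with the fact that every $(x,y)$ on the curve $q_1(x)=q_2(y)$ with the associated $\epsilon=y/g(1,x,y)$ is an honest fixed point (so no spurious points are introduced, and no legitimate ones are discarded). A secondary technicality is verifying continuity/existence at the infimum so that the sup/inf in Theorem~\ref{theorem:fix pt char} agrees with the parametric expression; this follows from continuity of $f,g,q_1,q_2$ and the monotonicity in Lemma~\ref{lemma:mono of DE}.
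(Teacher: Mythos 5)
Your proposal is correct and follows essentially the same route as the paper: both derive the relation $q_1(x)=q_2(y)$ by dividing the two fixed-point equations (the paper packages this as Lemma~\ref{lemma: fixed curve}), identify $\epsilon=y/g(1,q(y),y)$ as the smallest channel parameter supporting a fixed point with second coordinate $y$, and handle the $y=0$ case separately exactly as you do when $P_0>0$ and $\lambda^{(2)}(0)=0$. The only difference is organizational — the paper argues the two inequalities $\epsilon_2^*\le\cdot$ and $\epsilon_2^*\ge\cdot$ separately (constructing a specific fixed point for $\epsilon$ above the claimed value and deriving a contradiction below), whereas you recast Theorem~\ref{theorem:fix pt char} as an infimum over $\epsilon$ admitting a nontrivial fixed point and compute it directly; your justification of this recasting via the monotonicity lemmas is valid.
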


\begin{proof}
See Appendix~\ref{App:numerical th}.
\end{proof}
\begin{remark}
	Although the right-hand side of \eqref{eq:numerical th1}
	and the first argument in the $ \min $ operator of \eqref{eq:numerical th2} only have the variable $y$ in them and thus may appear to only depend on layer 2, in fact their values depend on both layers through the function $q(\cdot)$.
	Moreover, it is not clear, apriori, which argument of the $ \min $ operator of \eqref{eq:numerical th2} will be smaller, and one must use the above procedure to calculate the values.
\end{remark}
\begin{example} \label{ex:threshold}
Consider an ensemble characterized by 
\begin{align*}
\lambda^{(1)}(x)=x, \quad \rho^{(1)}(x)=x^9 ,\quad \lambda^{(2)}(x)=0.3396x+0.6604x^4,\quad P_0=0.2667, \quad \rho^{(2)}(x)=x^9 .
\end{align*}
Using \eqref{eq:design rate2} and \eqref{eq:1D th numeric}, the design rate is $R=0.5571$ and the type-1 decoding threshold is $\epsilon_1^*=0.1112$. In view of  \eqref{eq:numerical th2}, the decoding threshold\footnote{better thresholds for that rate are achieved in the next section, and these degree distributions are given to graphically exemplify the results derived so far} when using both layers is $\epsilon_2^*=\min\{0.35,0.4168\}=0.35$. 
Figure~\ref{Fig:DE} illustrates the 2D-DE equations in \eqref{eq:2D DE1}-\eqref{eq:2D DE3} for three different erasure probabilities: $0.33,0.35,0.37$, from left to right, respectively. When the channel's erasure probability is $\epsilon=0.33$, there are no non-trivial $(f,g)$-fixed points -- the decoding process ends successfully, and when $\epsilon=0.37$, there are two $(f,g)$-fixed points, $(0.335,0.3202)$ and $(0.2266,0.1795)$ -- the decoding process gets stuck at $(0.335,0.3202)$. When $\epsilon=0.35=\epsilon_2^*$, there is exactly one $(f,g)$-fixed point at $(0.27,0.237)$, and the dashed and dotted lines are tangent.
\end{example}

\begin{figure}[h!]
\begin{center}
\input{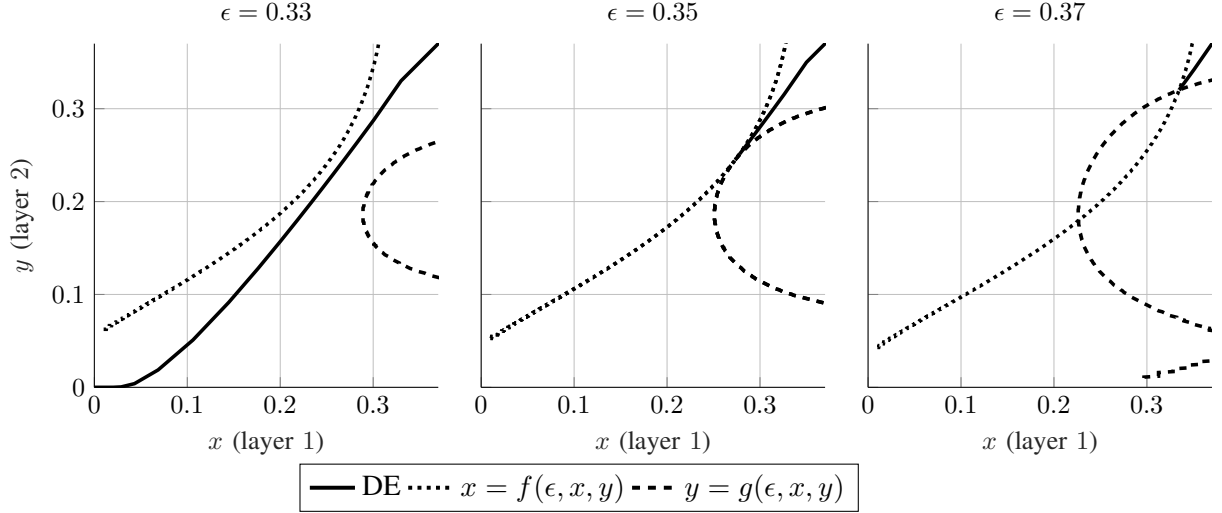}
\end{center}
\caption{\label{Fig:DE}
Illustration of the density-evolution equations in \eqref{eq:2D DE1}-\eqref{eq:2D DE3} for the ensemble in Example~\ref{ex:threshold}, which induce a decoding threshold of $\epsilon^*_2=0.35$. The evolved channel erasure probabilities, from left to right, are $\epsilon=0.33,0.35,0.37$.}
\end{figure}

\section{Code Construction and Approaching Capacity}
\label{Sec:C1}

In this section, we present a code construction, and show how to use this construction to optimally combine two degree distributions (type 1 and type 2) in order to approach capacity.
The inputs for the construction are the desired decoding thresholds, $\epsilon_1$ and $\epsilon_2$, and the outputs are degree-distributions $\left(\lambda^{(1)},\rho^{(1)},\lambda^{(2)},\rho^{(2)},P_0 \right)$ such that
\begin{align*}
\epsilon_1^*\left(\lambda^{(1)},\rho^{(1)} \right)=\epsilon_1 \quad \epsilon^*_2\left(\lambda^{(1)},\rho^{(1)},\lambda^{(2)},\rho^{(2)},P_0 \right)=\epsilon_2.
\end{align*}
The specified parameters $\epsilon_1$ and $\epsilon_2$ can be arbitrarily chosen as fit for the specific application using the codes. $\epsilon_2$ is logically chosen to meet the ``worst-case" noise level in extreme channel instances, while $\epsilon_1$ should specify a lower noise tolerance that is sufficient for a significant fraction of channel instances.  

In principle, setting $P_0=0$, and picking any two LDPC ensembles $\left(\lambda^{(1)},\rho^{(1)}\right)$ and $\left(\lambda^{(2)},\rho^{(2)}\right)$ that induce thresholds $\epsilon_1$ and $\epsilon_2$, respectively, would suffice, but this choice yields poor rates (intuitively, with that choice the type-1 and type-2 codes do not ``cooperate"). 
Another solution is not using the second layer at all, i.e., choosing $\left(\lambda^{(1)},\rho^{(1)}\right)$ that induce a threshold that equals to $\epsilon_2$, and setting $P_0=1$. However, this solution is an undesired overkill since it would miss the opportunity to have a low-complexity decoder for the majority of decoding instances where the erasure probabilities are below $\epsilon_1$.

\begin{definition} \label{def:h_eps}
Let $\left(\lambda^{(1)},\rho^{(1)} \right)$ be type-1 degree-distribution (DD) polynomials, and let $\epsilon_1$ be their decoding threshold.
For $\epsilon \in (\epsilon_1,1)$, let
\begin{enumerate}
\item $h_{\epsilon}(x)=\epsilon\cdot\lambda^{(1)}(1-\rho^{(1)}(1-x))-x,\quad x \in [0,1]$
\item \label{def:x_s}$x_s(\epsilon)=\max\{x \in [0,1] \colon h_\epsilon(x)\geq 0\}$ 
\item $a_s(\epsilon)=\Lambda^{(1)}\left(1- \rho^{(1)}\left(1-x_s(\epsilon) \right)\right)$
\end{enumerate} 
\end{definition}

For every $x \in [0,1]$, $h_\epsilon(x)$ is the erasure-probability change in one BP iteration on the type-1 sub-graph, if the current erasure probability is $x$. By definition, since $\epsilon>\epsilon_1$, the type-1 sub-graph fails to decode, and $h_\epsilon(x)>0$ for some $x \in [0,1]$. In addition, for every $x>\epsilon$, $h_\epsilon(x) < 0$, so $x_s(\epsilon)$ is well defined. Operationally, $x_s(\epsilon)$ is the type-1 erasure probability when the decoder gets stuck (using only type-1 edges). Items 1 and 2 have appeared in \cite{KudRichUrb11}; we add $a_s(\epsilon)$ as a function of  $x_s(\epsilon)$ that encapsulates the erasure probability passed from layer 1 to layer 2.

\subsection{Code Construction}\label{Sub:Construction}
\begin{construction}\label{Construct:LDPCL_BEC}~

\underline{Input:} thresholds $\epsilon_{1}$ and $\epsilon_{2}>\epsilon_1$.

\begin{enumerate}
	\item \label{Item:local dd BEC}Choose any type-1 DD $(\lambda^{(1)},\rho^{(1)})$ that induce a threshold $\epsilon_{1}$.
	\item Calculate $ a_s(\epsilon_2) $.
	\item \label{item:eps_J} Choose any type-2 DD $(\lambda^{(2)},\rho^{(2)})$ that induce a threshold $\epsilon_{2}\cdot a_s(\epsilon_2)$.
	\item Set $ P_0=\epsilon_1/\epsilon_2 $.
\end{enumerate}	

\end{construction}

\begin{remark}
	The main feature of the above construction is that one can use off-the-shelf codes for the two layers and no further optimization is needed. Moreover, if the component codes are efficient (in rate) then so is their combination. We will later investigate how the type-1 and type-2 gaps to capacity affect the overall gap to capacity.
\end{remark}
\begin{theorem}\label{th:C1}
Let $ (\lambda^{(1)},\rho^{(1)},\lambda^{(2)},\rho^{(2)},P_0)$ be degree distributions constructed by Construction~\ref{Construct:LDPCL_BEC}. Then $\epsilon^*_2(\lambda^{(1)},\rho^{(1)},\lambda^{(2)},\rho^{(2)},P_0)=\epsilon_2$.
\end{theorem}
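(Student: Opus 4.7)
My plan is to leverage the fixed-point characterization in Theorem~\ref{theorem:fix pt char}: show that no non-trivial $(f,g)$-fixed point exists for any $\epsilon<\epsilon_2$ (hence $\epsilon^*_2\ge\epsilon_2$), and that such a fixed point does exist for every $\epsilon>\epsilon_2$ (hence $\epsilon^*_2\le\epsilon_2$). Construction~\ref{Construct:LDPCL_BEC} supplies three facts I will use repeatedly: $\epsilon_1^*(\lambda^{(1)},\rho^{(1)})=\epsilon_1$, the layer-2 threshold equals $\epsilon_2\,a_s(\epsilon_2)$, and $P_0=\epsilon_1/\epsilon_2$.

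For the lower bound, fix $\epsilon<\epsilon_2$ and assume $(x^*,y^*)\in(0,1]\times[0,1]$ is a non-trivial $(f,g)$-fixed point. If $y^*>0$, the $y$-equation combined with $\Lambda^{(1)}(0)=0$ forces $x^*>0$, and the $x$-equation yields $x^*/\lambda^{(1)}(1-\rho^{(1)}(1-x^*))=\epsilon\,\Lambda^{(2)}(1-\rho^{(2)}(1-y^*))\le\epsilon$ (since $\Lambda^{(2)}\le 1$), so $x^*\in\{x:h_\epsilon(x)\ge 0\}$ and hence $x^*\le x_s(\epsilon)\le x_s(\epsilon_2)$ by monotonicity of $x_s(\cdot)$ in $\epsilon$. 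Consequently $\Lambda^{(1)}(1-\rho^{(1)}(1-x^*))\le a_s(\epsilon_2)$, and the $y$-equation gives
\begin{align*}
\frac{y^*}{\lambda^{(2)}(1-\rho^{(2)}(1-y^*))}=\epsilon\,\Lambda^{(1)}(1-\rho^{(1)}(1-x^*))<\epsilon_2\,a_s(\epsilon_2),
\end{align*}
contradicting \eqref{eq:1D th numeric} applied to $(\lambda^{(2)},\rho^{(2)})$, whose threshold equals $\epsilon_2\,a_s(\epsilon_2)$ by step~\ref{item:eps_J} of the construction. If instead $y^*=0$, either $\lambda^{(2)}(0)>0$ and item~\ref{item: fix pt lemma0} of Lemma~\ref{lemma:fix pt} forces $x^*=0$, or $\lambda^{(2)}(0)=0$ and the $x$-equation collapses to $x^*/\lambda^{(1)}(1-\rho^{(1)}(1-x^*))=\epsilon\,P_0<\epsilon_2\,P_0=\epsilon_1$, again contradicting \eqref{eq:1D th numeric} for layer 1.

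For the upper bound, fix $\epsilon>\epsilon_2$. When $\lambda^{(2)}(0)=0$, setting $y^*=0$ reduces the $x$-equation to a single-layer fixed-point equation at effective erasure $\epsilon\,P_0>\epsilon_1$; the continuous map $x\mapsto x/\lambda^{(1)}(1-\rho^{(1)}(1-x))$ on $(0,1]$ has infimum $\epsilon_1$ and value $1$ at $x=1$, so by the intermediate-value theorem a positive $x^*$ satisfying $x^*/\lambda^{(1)}(1-\rho^{(1)}(1-x^*))=\epsilon\,P_0$ exists, and $(x^*,0)$ is the required non-trivial fixed point.

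The main obstacle is the case $\lambda^{(2)}(0)>0$, in which Lemma~\ref{lemma:fix pt} rules out the degenerate path $y^*=0,x^*>0$ and one must produce a genuinely two-dimensional fixed point. Here I plan to invoke Theorem~\ref{th:numerical th}, whose formula in this regime reads $\epsilon^*_2=\inf_{y:q_2(y)\le 1}y/g(1,q(y),y)$, and use the rewriting
\begin{align*}
\frac{y}{g(1,q(y),y)}=\frac{y/\lambda^{(2)}(1-\rho^{(2)}(1-y))}{\Lambda^{(1)}(1-\rho^{(1)}(1-q(y)))}
\end{align*}
to reduce the matching upper bound $\epsilon^*_2\le\epsilon_2$ to exhibiting a point on the curve $\{(q(y),y):q_2(y)\le 1\}$ along which $y/\lambda^{(2)}(\cdot)|_y\to\epsilon_2\,a_s(\epsilon_2)$ and simultaneously $q(y)\to x_s(\epsilon_2)$. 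The first convergence is a consequence of \eqref{eq:1D th numeric} for layer 2 at the minimizer $y^+$; the tangency of the layer-1 stuck point $x_s(\epsilon_2)$ with $q(y^+)$ --- visible as the tangent crossing in the centre panel of Figure~\ref{Fig:DE} --- is what I anticipate as the technically most delicate step, and will likely require a careful identity-chase using $q_1(x_s(\epsilon_2))=\epsilon_2\,a_s(\epsilon_2)$ together with continuity of $q$ at the boundary of its domain of definition.
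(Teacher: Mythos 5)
Your lower bound is correct and mirrors the paper's argument: you split on whether $y^*>0$ or $y^*=0$, use $x^*\le x_s(\epsilon)\le x_s(\epsilon_2)$ together with $a_s$ to contradict the layer-2 threshold $\epsilon_2\,a_s(\epsilon_2)$ in the first case, and contradict the layer-1 threshold $\epsilon_1$ via $P_0$ in the second. The paper phrases this as a partition of $(0,1]\times[0,1]$ into regions where $f<x$ or $g<y$ (equations~\eqref{eq:f<x y=0}--\eqref{eq:g<y}), but the content is the same.

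For the upper bound, the paper simply reads off $\epsilon^*_2\le\epsilon_1/P_0=\epsilon_2$ from the second argument of the $\min$ in \eqref{eq:numerical th2}, whereas you re-derive that direction from scratch by exhibiting the degenerate fixed point $(x^*,0)$ via the intermediate-value theorem. This is a valid and slightly more self-contained route (it is in fact the same IVT argument that appears inside Appendix~\ref{App:numerical th} at \eqref{eq:converse12}), so the two approaches agree in substance.

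The genuine gap is your treatment of the case $\lambda^{(2)}(0)>0$, which you flag as ``the technically most delicate step'' and leave as a speculative sketch involving a tangency at $q(y^+)$. That case never arises, and you should observe this rather than try to prove something in it. Step~\ref{item:eps_J} of Construction~\ref{Construct:LDPCL_BEC} requires $(\lambda^{(2)},\rho^{(2)})$ to be a standalone LDPC degree-distribution pair with threshold $\epsilon_2\,a_s(\epsilon_2)$, and since $\epsilon_2>\epsilon_1$ gives $x_s(\epsilon_2)>0$ and hence $a_s(\epsilon_2)>0$, this threshold is strictly positive. A standalone ensemble with $\lambda^{(2)}(0)>0$ (i.e., degree-$1$ variable nodes) has
\begin{align*}
\inf_{y\in(0,1]}\frac{y}{\lambda^{(2)}\bigl(1-\rho^{(2)}(1-y)\bigr)}\le\lim_{y\to 0}\frac{y}{\lambda^{(2)}(0)}=0,
\end{align*}
so its threshold is zero and it cannot be chosen in step~\ref{item:eps_J}. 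Thus $\lambda^{(2)}(0)=0$ always holds under the construction, the ``Else'' branch~\eqref{eq:numerical th2} of Theorem~\ref{th:numerical th} is the applicable one (this is the branch the paper implicitly uses), and your case-1 IVT argument already completes the upper bound. The tangency machinery you anticipated is unnecessary; once you add the one-line observation that $\lambda^{(2)}(0)=0$ is forced, your proof is complete.
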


\begin{proof}
From \eqref{eq:1D th numeric} and Theorem~\ref{th:numerical th}, since $P_0=\epsilon_1/\epsilon_2>0$, we have
\begin{align}\label{eq:up bound th C1}
\epsilon^*_2 \leq \frac1{P_0}\cdot \inf_{(0,1]}\frac{x}{\lambda^{(1)}\left(1- \rho^{(1)}\left(1-x \right)\right)} = \frac{\epsilon_1}{P_0} = \epsilon_2.
\end{align}

For the opposite direction, let $\epsilon_1<\epsilon<\epsilon_2$. In view of Theorem~\ref{theorem:fix pt char}, it suffices to show that \eqref{eq:f,g fixed} has no solution for $(x,y)\in (0,1]\times [0,1]$. In view of \eqref{eq:f}, for every $ x \in (0,1]$,
\begin{align} \label{eq:f<x y=0}
f({\epsilon},x,0) &= {\epsilon} \lambda^{(1)}(1-\rho^{(1)}(1-x))P_0 \notag \\
&< P_0 \epsilon_2 \lambda^{(1)}(1-\rho^{(1)}(1-x)) \notag \\
&= \epsilon_1 \lambda^{(1)}(1-\rho^{(1)}(1-x)) \notag \\
&\leq x .
\end{align}
Furthermore, Definition~\ref{def:h_eps} implies that for every $(x,y) \in (x_s({\epsilon}), 1)\times[0,1]$,  
\begin{align}\label{eq:f<x}
f({\epsilon},x,y) 
&= {\epsilon} \lambda^{(1)}(1-\rho^{(1)}(1-x)) \Lambda^{(2)}\left(1- \rho^{(2)}\left(1-y\right)\right) \notag \\
&< x\Lambda^{(2)}\left(1- \rho^{(2)}\left(1-y\right)\right) \notag \\
&\leq x,
\end{align}
and from Lemma~\ref{lemma:monotonicity}, if $(x,y) \in (0,x_s({\epsilon})]\times(0,1]$, 
\begin{align} \label{eq:<a_eps}
g({\epsilon},x,y)
&\leq g({\epsilon},x_s({\epsilon}),y) \notag \\
&= {\epsilon} \lambda^{(2)}(1-\rho^{(2)}(1-y)) \Lambda^{(1)}\left(1- \rho^{(1)}\left(1-x_s({\epsilon})\right)\right) \notag \\
&= {\epsilon} \lambda^{(2)}(1-\rho^{(2)}(1-y)) a_s({\epsilon}).
\end{align}
Since ${\epsilon}<\epsilon_2$, then $\epsilon_2\cdot a_s(\epsilon_2) > {\epsilon}\cdot a_s({\epsilon})$; thus \eqref{eq:<a_eps} yields 
\begin{align} \label{eq:g<y}
g({\epsilon},x,y)< y ,\quad \forall (x,y) \in (0,x_s({\epsilon})]\times(0,1].
\end{align}
Combining \eqref{eq:f<x y=0}, \eqref{eq:f<x}, and \eqref{eq:g<y} implies that \eqref{eq:f,g fixed} has no solution in $(0,1]\times [0,1]$. Thus $\epsilon^*_2 \geq \epsilon$. Since this is true for any ${\epsilon}<\epsilon_2$, we conclude that 
\begin{align*}
\epsilon^*_2 \geq \epsilon_2,
\end{align*}
which combined with \eqref{eq:up bound th C1} completes the proof.
\end{proof}

\begin{remark}
In most cases, it is hard to produce an analytical expression for $x_s(\epsilon)$, but if we limit the type-1 degrees of the ensemble to be small, then a closed-form expression could be derived for $x_s(\epsilon)$ and $a_s(\epsilon)$.
\end{remark}

\begin{example}
Consider type-1 degree distributions taking the form:
\begin{align} \label{eq:low local2}
	\lambda^{(1)}(x)=x,\, \rho^{(1)}(x) = \rho^{(1)}_{2}x+\rho^{(1)}_{3}x^2+\rho^{(1)}_{4}x^3, \quad \rho^{(1)}_{i}\geq 0,\,\sum_{i=2}^4  \rho^{(1)}_{i}=1.
\end{align}  
In view of \eqref{eq:1D th numeric}, for the family of ensembles given in \eqref{eq:low local2}, $\epsilon_1=\frac1{1+\rho^{(1)}_{3}+2\rho^{(1)}_{4}}\;.$
In addition, for every $\epsilon_2 \in (\epsilon_1,1)$,
\begin{align*}
	x_s(\epsilon_2)=\left\{
	\begin{array}{ll}
	1-\tfrac1{\rho^{(1)}_{3}}\left(1/\epsilon_2-1 \right), & \rho^{(1)}_{4}=0 \\
	\tfrac{\rho^{(1)}_{3}+3\rho^{(1)}_{4}-\sqrt{\left( \rho^{(1)}_{3}+\rho^{(1)}_{4}\right)^2+4\rho^{(1)}_{4}\left(1/\epsilon_2-1 \right)}}{2\rho^{(1)}_{4}}, & \rho^{(1)}_{4}>0
	\end{array}
	\right.\;.
\end{align*}
Finally, since $ \Lambda^{(1)}(x)=x^2 $, and in view of Definition~\ref{def:h_eps},
\begin{align*}
a_s(\epsilon_2) = \Lambda^{(1)}(1-\rho^{(1)}(1-x_s(\epsilon_2))) = \left (\frac{x_s(\epsilon_2)}{\epsilon_2}\right )^2\,.
\end{align*}
These closed-form expressions of $ x_s $ and $ a_s $ can be used for constructing a layer-1 code designed for certain parameters $\epsilon_1\;, \epsilon_2$, using a simple optimization of the parameters $\rho^{(1)}_{2}\;,\rho^{(1)}_{3}\;,\rho^{(1)}_{4}$. This optimization maximizes the fraction of bits layer 1 uncovers for layer 2 in case the channel parameter is $ \epsilon_2 $, while guaranteeing its own noise resilience $ \epsilon_1 $ (we omit the details here). 

\end{example}

\subsection{Approaching Capacity} \label{sub:c.a. seq}

In this sub-section, we show how to approach capacity in the bilayer framework, using the construction proposed in sub-section~\ref{Sub:Construction}. More specifically, we upper bound the bilayer additive gap to capacity with a certain linear combination of the gaps to capacity of the two component codes. During the derivation, we refer to $\delta(\lambda,\rho)$ as the additive gap to capacity of the (single layer)  $LDPC(\lambda,\rho)$ ensemble, i.e., $\delta(\lambda,\rho)=1-\epsilon^*(\lambda,\rho)-R(\lambda,\rho)$. Similarly, we define  $\delta\left(\lambda^{(1)},\rho^{(1)},\lambda^{(2)},\rho^{(2)},P_0\right)=1-\epsilon^*_2(\lambda^{(1)},\rho^{(1)},\lambda^{(2)},\rho^{(2)},P_0)-R(\lambda^{(1)},\rho^{(1)},\lambda^{(2)},\rho^{(2)},P_0)$ as the bilayer gap to capacity.

\begin{definition} \label{def:c.a. seq}
	A sequence of bilayer degree distributions $\left\{ \lambda^{(1;k)}(\cdot),\rho^{(1;k)}(\cdot),\lambda^{(2;k)}(\cdot),\rho^{(2;k)}(\cdot),{P_0}_k\right\}_{k \geq 1}$ with associated rates $ \{R_k\}_{k\geq1} $ is said to approach capacity on a BEC with two channel parameters $ 0 < \epsilon_1 < \epsilon_2<1$ if:
	\begin{enumerate}
		\item The threshold of layer 1 approaches $\epsilon_1$ as $ k\to\infty $.
		\item The threshold with both layers approaches $\epsilon_2$ as $ k\to\infty $.
		\item $R_k$ approaches $ 1-\epsilon_2 $ as $k\to\infty $.
	\end{enumerate}
\end{definition}
Note that items~2 and~3 imply that $\lim_{k \to \infty}\delta\left(\lambda^{(1;k)}(\cdot),\rho^{(1;k)}(\cdot),\lambda^{(2;k)}(\cdot),\rho^{(2;k)}(\cdot),{P_0}_k\right)=0$.

\begin{lemma} \label{lemma:gap2cap}
	Let $\left(\lambda^{(1)},\rho^{(1)},\lambda^{(2)},\rho^{(2)},P_0\right)$ be bilayer degree-distribution polynomials constructed according to Construction~\ref{Construct:LDPCL_BEC}, and let $\delta_1 \triangleq \delta(\lambda^{(1)},\rho^{(1)})$ and $\delta_2\triangleq\delta(\lambda^{(2)},\rho^{(2)})$ be the type-1 and type-2 (additive) gaps to capacity, respectively. Then,
	\begin{align} \label{eq:delta}
	\delta\left(\lambda^{(1)},\rho^{(1)},\lambda^{(2)},\rho^{(2)},P_0\right) \leq \delta_1+\delta_2\cdot \left( 1-P_0\right).
	\end{align}
\end{lemma}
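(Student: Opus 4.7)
The plan is to compute the bilayer gap directly from the design-rate formula, substitute the specific parameters from Construction~\ref{Construct:LDPCL_BEC}, and reduce to a single algebraic inequality that follows from $a_s(\epsilon_2)\le 1$.

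First I would rewrite the design-rate formula \eqref{eq:design rate2} in terms of the single-layer rates $R_i\triangleq R(\lambda^{(i)},\rho^{(i)})=1-\int_0^1\rho^{(i)}/\int_0^1\lambda^{(i)}$, giving
\begin{align*}
R\bigl(\lambda^{(1)},\rho^{(1)},\lambda^{(2)},\rho^{(2)},P_0\bigr)=1-(1-R_1)-(1-R_2)(1-P_0).
\end{align*}
By Construction~\ref{Construct:LDPCL_BEC}, the type-1 DD induces threshold $\epsilon_1$, so $R_1=1-\epsilon_1-\delta_1$, and the type-2 DD induces threshold $\epsilon_2\cdot a_s(\epsilon_2)$, so $R_2=1-\epsilon_2 a_s(\epsilon_2)-\delta_2$. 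Theorem~\ref{th:C1} tells us the bilayer threshold equals $\epsilon_2$, hence the bilayer gap is
\begin{align*}
\delta=1-\epsilon_2-R=(\epsilon_1+\delta_1)+\bigl(\epsilon_2 a_s(\epsilon_2)+\delta_2\bigr)(1-P_0)-\epsilon_2.
\end{align*}

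Next, I would substitute $P_0=\epsilon_1/\epsilon_2$, which gives $1-P_0=(\epsilon_2-\epsilon_1)/\epsilon_2$. The cross-terms involving $\epsilon_1,\epsilon_2$ and $\epsilon_2 a_s(\epsilon_2)(1-P_0)=a_s(\epsilon_2)(\epsilon_2-\epsilon_1)$ combine into
\begin{align*}
\delta=\delta_1+\delta_2(1-P_0)+(\epsilon_2-\epsilon_1)\bigl(a_s(\epsilon_2)-1\bigr).
\end{align*}

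Finally, since $\Lambda^{(1)}(\cdot)$ and $\rho^{(1)}(\cdot)$ both map $[0,1]$ into $[0,1]$, Definition~\ref{def:h_eps} gives $a_s(\epsilon_2)=\Lambda^{(1)}(1-\rho^{(1)}(1-x_s(\epsilon_2)))\in[0,1]$, so the last term is non-positive. Together with $\epsilon_2>\epsilon_1$ this yields $\delta\le \delta_1+\delta_2(1-P_0)$, which is \eqref{eq:delta}. There is no real obstacle: the argument is a short algebraic manipulation, and the only non-trivial ingredient is using Theorem~\ref{th:C1} to know that the bilayer threshold is exactly $\epsilon_2$ (so that Construction~\ref{Construct:LDPCL_BEC}'s choice of the layer-2 threshold $\epsilon_2 a_s(\epsilon_2)$ combines cleanly with the rate expression).
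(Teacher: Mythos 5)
Your proof is correct and follows essentially the same route as the paper's (the paper proves this in the general multi-layer form in Appendix~\ref{App:Multigap2cap}; your argument is its $L=2$ specialization): expand the bilayer gap via the design-rate formula, substitute the component rates using the thresholds dictated by Construction~\ref{Construct:LDPCL_BEC} and $P_0=\epsilon_1/\epsilon_2$, and observe that the residual term $(a_s(\epsilon_2)-1)(\epsilon_2-\epsilon_1)$ is non-positive since $a_s(\epsilon_2)\le1$. Your explicit invocation of Theorem~\ref{th:C1} to justify using $\epsilon_2$ as the bilayer threshold in the gap formula is a point the paper leaves implicit, which is a nice clarification.
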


\begin{proof}
	See Appendix~\ref{App:Multigap2cap}.
\end{proof}

\begin{remark}
	In principle, the bound in Lamma~\ref{lemma:gap2cap} is not tight since $ a_s(\epsilon_2)\epsilon_2<\epsilon_2 $, and the gap to capacity is smaller. However, since the left-hand side of the last inequality depends on the particular degree distributions used in Construction~\ref{Construct:LDPCL_BEC}, the bound \eqref{eq:delta} has the advantage of applying in full generality. 
\end{remark}

At this point, it should be clear how to construct a capacity-approaching sequence of bilayer ensembles with two thresholds $ 0 < \epsilon_1 < \epsilon_2<1$. Choose any two sequences of ``ordinary" LDPC ensembles $\left\{ \lambda^{(1;k)}(\cdot),\rho^{(1;k)}(\cdot),\right\}_{k \geq 1}$ and $\left\{ \lambda^{(2;k)}(\cdot),\rho^{(2;k)}(\cdot),\right\}_{k \geq 1}$ that achieve capacity on the BEC($\epsilon_1$) and BEC($\epsilon_2$), respectively, and set ${P_0}_k=\left(1-{\epsilon_1}/{\epsilon_2}\right)$, for all $k \geq 1$. Item 1 in Definition~\ref{def:c.a. seq} clearly holds for this  sequence, and in view of Theorem~\ref{th:C1}, item 2 in Definition~\ref{def:c.a. seq} holds as well. Finally, Lemma~\ref{lemma:gap2cap} implies that
\begin{align*}
\lim\limits_{k \to \infty}\delta\left(\lambda^{(1;k)}(\cdot),\rho^{(1;k)}(\cdot),\lambda^{(2;k)}(\cdot),\rho^{(2;k)}(\cdot),{P_0}_k\right)  
&\leq  \lim\limits_{k \to \infty}\delta\left(\lambda^{(1;k)}(\cdot),\rho^{(1;k)}(\cdot)\right)  \\
&\hspace*{4mm}+  \lim\limits_{k \to \infty}\delta\left(\lambda^{(2;k)}(\cdot),\rho^{(2;k)}(\cdot)\right)\left(1-{\epsilon_1}/{\epsilon_2}\right) \\
&=0.
\end{align*} 

\begin{example} \label{ex:c.a.}
	We construct a bilayer capacity-achieving sequence with thresholds $\epsilon_1=0.05$ and $\epsilon_2=0.2$. We set $P_0={\epsilon_1}/{\epsilon_2}=0.25$ and we use the Tornado capacity-approaching sequence \cite{Luby01},
	\begin{align} \label{eq:Tornado}
	\begin{split}
	& \lambda^{(1)}(x)=\frac1{H(D_1)}\sum_{i=1}^{D_1} \frac{x^i}{i}, \qquad
	\lambda^{(2)}(x)=\frac1{H(D_2)}\sum_{i=1}^{D_2} \frac{x^i}{i}, \\
	& \rho^{(1)}(x)=e^{-\alpha_1}\sum_{i=0}^{\infty} \frac{(\alpha_1 x)^i}{i!}, \qquad
	\rho^{(2)}(x)=e^{-\alpha_2}\sum_{i=0}^{\infty} \frac{(\alpha_2 x)^i}{i!},
	\end{split}
	\end{align}
	where $H(\cdot)$ is the harmonic sum, $\alpha_j=\tfrac{H(D_j)}{\epsilon_j}$ , $j\in\{1,2\}$ (the check degree-distribution series are truncated to get degree-distribution polynomials with finite degrees). $D_1$ (resp. $D_2$) controls the type-1 (resp. type-2) gap to capacity $\delta_1$ (resp. $\delta_2$); the bigger it is, the smaller the gap is. 
	
	Figure~\ref{Fig:CA} exemplifies how the sequence 
	$\left\{\lambda^{(1)},\rho^{(1)},\lambda^{(2)},\rho^{(2)},P_0 \right\}$ approaches capacity as $D_1\to \infty,\;D_2\to \infty$: Theorem~\ref{th:C1} implies that for every value of $D_1$ and $D_2$, the global decoding threshold is $\epsilon^*_2\geq0.2$; the type-1 gap to capacity $\delta_1$ and type-2 gap to capacity $\delta_2$ both vanish as $D_1\to \infty$ and $D_2\to \infty$, which in view of \eqref{eq:delta}, implies that the overall gap to capacity $\delta$ vanishes as well.
	In addition, as demonstrated in Figure~\ref{Fig:CA}, $\delta_1$ vanishes much faster with $D_1$ thanks to the lower $\epsilon_1$.
	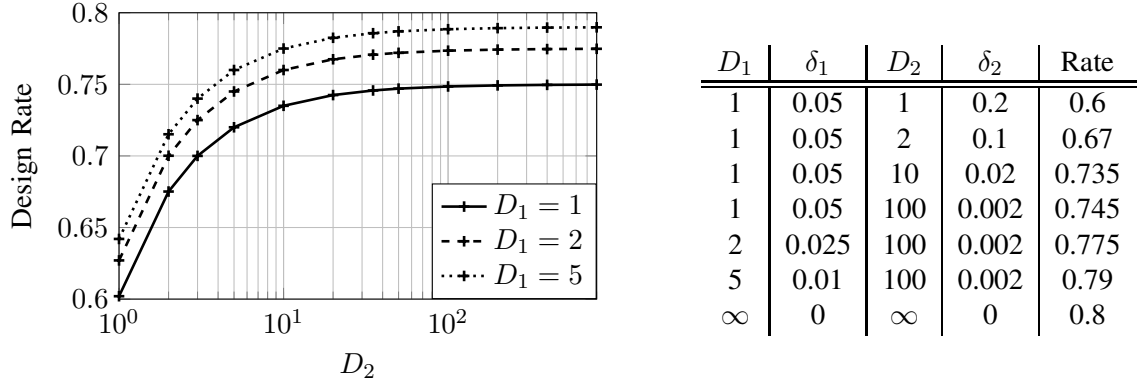
\begin{figure}
		\begin{minipage}{.6\textwidth}
\begin{tikzpicture}

\begin{axis}[%
at={(0,0)},
width=2.5in,
height=1.5in,
scale only axis,
xmode=log,
xmin=1,
xmax=800,
xminorticks=true,
xmajorticks=true,
xlabel={$D_2$},
ymin=0.6,
ymax=0.8,
ylabel={Design Rate},
xmajorgrids,
xminorgrids,
ymajorgrids,
legend entries={$D_1=1$,$D_1=2$,$D_1=5$},
legend style={
	legend cell align=left, 
	align=left, 
	draw, 
	fill=white,
	at={(1,0)},
	anchor=south east}
]

\addplot [line width=1.0pt, mark=+, mark options={solid}]
  table[row sep=crcr]{%
1	0.602021384305841\\
2	0.675124444189399\\
3	0.700020893034882\\
5	0.720001982085271\\
10	0.735000072202418\\
20	0.7425000026321\\
35	0.745714286073054\\
50	0.747000000232104\\
100	0.74850000020694\\
200	0.749250000206142\\
400	0.749625000206116\\
800	0.749812500206115\\
};
\addplot [dashed,line width=1.0pt, mark=+, mark options={solid}]
  table[row sep=crcr]{%
1	0.627021384099733\\
2.0	0.70012444398329\\
3.0	0.725020892828773\\
5.0	0.745001981879163\\
10 	0.76000007199631\\
20 	0.767500002425992\\
35 	0.770714285866945\\
50 	0.772000000025996\\
100	0.773500000000832\\
200	0.774250000000033\\
400	0.774625000000008\\
800	0.774812500000007\\
};
\addplot [dotted,line width=1.0pt, mark=+, mark options={solid}]
  table[row sep=crcr]{%
1  0.642021384099757\\
2  0.715124443983314\\
3  0.740020892828797\\
5  0.760001981879187\\
10 0.775000071996334\\
20 0.782500002426016\\
35 0.785714285866969\\
50 0.78700000002602\\
100 0.788500000000856\\
200 0.789250000000057\\
400 0.789625000000032\\
800 0.789812500000031\\
};
\end{axis}
\end{tikzpicture}%

\end{minipage}
\begin{minipage}{.4\textwidth}
	\begin{tabular}{ c|c|c|c|c}
		$D_1$&$\delta_1$&$D_2$&$\delta_2$&Rate    \\ \hline \hline
		1    &  0.05    &  1  &  0.2     & 0.6    \\
		1    &  0.05    &  2  &  0.1     & 0.67   \\
		1    &  0.05    &  10 &  0.02    & 0.735  \\
		1    &  0.05    &  100&  0.002   & 0.745  \\
		2    &  0.025   &  100&  0.002   & 0.775  \\
		5    &  0.01    &  100&  0.002   & 0.79   \\
		$\infty$&     0    &$\infty$&  0    & 0.8    
	\end{tabular}
	
\end{minipage}
		\caption{\label{Fig:CA} Left side: design rate of a the bilayer ensembles constructed by Construction~\ref{Construct:LDPCL_BEC} with two Tornado component codes from Example~\ref{ex:c.a.}. The horizontal axis $D_2$ and the different plots $ D_1 $ are the degree of the polynomials $ \lambda^{(2)}(\cdot)$ and $ \lambda^{(1)}(\cdot) $, respectively. Right side: type-1 gap to capacity $\delta_1$, type-2 gap to capacity $\delta_2$, and rate of the bilayer capacity-achieving sequence from Example~\ref{ex:c.a.}.}
	\end{figure}
	
\end{example}

\begin{remark}
	Figure~\ref{Fig:CA} (in particular the second from bottom row in the table) shows the advantage of the bilayer scheme: one can get very close to capacity with type-1 ensembles that are extremely low complexity thanks to their low $D_1$ values. 
\end{remark}

\begin{remark}\label{remark:complexity}
	As mentioned in Section~\ref{Sec:Intro}, the complexity advantage of decoding bilayer LDPC codes using only layer 1 over decoding ordinary LDPC codes comes from the fact that since layer 1 is designed for lower noise levels, for the same gap to capacity its node degrees are lower than those of the ordinary LDPC. For the BEC, counting edges in the Tanner graph is a first-order approximation of the decoding complexity. We now perform a comparison between two codes: Code~1 is a bilayer LDPC code, and Code~2 an ordinary LDPC code. Code~1 is constructed by Construction~\ref{Construct:LDPCL_BEC} for erasure levels $ 0<\epsilon_1<\epsilon_2<1 $, and Code~2 is designed for the worst-case channel parameter $\epsilon_2$. In particular, we  take Code~1 degree distributions $ \left( \lambda^{(1)},\rho^{(1)},\lambda^{(2)},\rho^{(2)},P_0\right ) $ from Example~\ref{ex:c.a.} with $D_1=2,D_2=10$. Code~2's degree distributions are $\lambda^{(2)},\rho^{(2)}$.
	The variable-node type-1 and type-2 average degrees in Code~1 (bilayer) are denoted by $ d_1  $ and $d_2 $, respectively. The variable-node average degree in Code~2 is denoted by $ d $. It is known that if the variable-node degree distribution is given by $ \lambda(\cdot) $, then their average degree is given by $ 1/\!\int_0^1 \lambda  $. Hence 
	\begin{align*}
	&d_1 =\left (\int_0^1\lambda^{(1)}(x)\mathrm{d}x\right )^{-1}=2.25,\\
	&d_2 =(1-P_0)\cdot\left (\int_0^1\lambda^{(2)}(x)\mathrm{d}x\right )^{-1}=2.41\\
	&d   =\left (\int_0^1\lambda^{(2)}(x)\mathrm{d}x\right )^{-1}=3.22.
	\end{align*} 
	The complexity reduction when decoding layer 1 is $30.16\%$ compared to the ordinary LDPC code. When decoding both layers we pay with an increase of $45.37\%$ in the average degree, but this applies only to the 2-layers decoder which we assume to be used either infrequently (if most decoding instances have erasure rates below $\epsilon_1$) or by nodes where computational parsimony is less critical (compared to nodes performing layer-1 decoding).
\end{remark}

\section{General Multi-Layer Construction}\label{Sec:Multi}
In this section, we show how to generalize Construction~\ref{Construct:LDPCL_BEC} to more than two layers. This generalization is motivated, for example, by multi-block coding \cite{RamCassuto18a} with a hierarchical structure where a number of sub-blocks are joined to form a super-block, and a number of super-blocks are joined further, etc. Another example is a multiple-relay channel, with a source, $ L-1 $ relays, and a destination (see \cite{RazaYu07} for the relay channel with $L=2$). The source sends a message to all relays and destination, and for every $ i\in\{1,2,\ldots,L-1\} $, the $ i $-th relay decodes its incoming message and forwards parity bits to relays $ j\in\{i+1,\ldots,L-1\} $, and to the destination.

The advantage of the framework developed in this paper toward the multi-layer extension is that the number of parameters of the ensemble grows linearly with $L$. An extension of \cite{RazaYu07} to multi-layer codes through multi-variate DD polynomials would imply exponential growth of the number of ensemble parameters.  

\subsection{Code Structure \& Density Evolution}
Let $ L>1 $ be an integer. The $L$-layer ensemble is characterized by the block length $ n $, and $ L $ degree-distributions polynomials $ \left \{\left ( \Lambda^{(i)}(\cdot),\Omega^{(i)}(\cdot)\right )\right \}_{i=1}^L $. Each variable node has $ L $ types of edges emanating from it with degrees specified by $ \{\Lambda^{(i)}(\cdot)\}_{i=1}^L $, and check nodes are divided into $ L $ types with degrees specified by $ \{\Omega^{(i)}(\cdot)\}_{i=1}^L $ where check nodes can connect only to edges of the same type.
For every layer $i\in\{1,2,\ldots,L\}  $, we denote by $ P_0^{(i)}=\Lambda^{(i)}(0)$ the fraction of variable-nodes with no type-$i$ edges. From edge perspective  the degree-distribution polynimals are given by  $ \left \{\left ( \lambda^{(i)}(\cdot),\rho^{(i)}(\cdot), P_0^{(i)}\right )\right \}_{i=1}^L $. 
Since layer $1$ should have a positive threshold, then we require that $  P_0^{(1)}=0 $. For $ i\in\{2,3,\ldots,L\} $ we allow $  P_0^{(i)}>0 $.

The generalization of the density-evolution equations in \eqref{eq:2D DE1}--\eqref{eq:2D DE3} for the multi-layer ensemble are given by
\begin{align}\label{eq:multiDE}
\begin{split}
x^{(i)}_l(\epsilon) &= \epsilon \cdot \lambda^{(i)}\left(1- \rho^{(i)}\left(1-x^{(i)}_{l-1}(\epsilon) \right)\right)\cdot \prod_{j\neq i}\Lambda_j\left(1- \rho_j\left(1-x^{(j)}_{l-1}(\epsilon) \right)\right)\\
&\triangleq f_i\left ( \epsilon,\mathbf{x}^{(L)}_{l-1}(\epsilon)\right ), \quad l \geq 0,\quad \forall 1\leq i \leq L,\\
{x}^{(i)}_{-1}(\epsilon)&=1,\quad \forall 1\leq i \leq L,
\end{split}
\end{align}
where for every $ i\in\{1,2,\ldots,L\} $, $ x^{(i)}_l(\epsilon) $ is the probability that a type-$i$ edge carries a variable-to-check erasure messages after $l$ BP iterations over the BEC($\epsilon$), and $ \mathbf{x}^{(i)}_{l}(\epsilon)=\left ( x^{(1)}_l(\epsilon),x^{(2)}_l(\epsilon),\ldots,x^{(i)}_l(\epsilon)\right ) $. In what follows, we omit $ \epsilon $ from $\mathbf{x}^{(i)}_l(\epsilon)   $, and for brevity we re-write \eqref{eq:multiDE} as $ \mathbf{x}^{(i)}_l = F^{(i)}\left (\epsilon,\mathbf{x}^{(L)}_{l-1}\right )  $ where $ F^{(i)}\colon [0,1]^{i+1}\to [0,1]^L $ encapsulates the first $ i $ density-evolution equations. 
\subsection{Code Construction}
\begin{construction}\label{Const:multi}~
	\begin{itemize}
		\item \underline{Input:} thresholds $0<\epsilon_{1}<\epsilon_{2}<\dots<\epsilon_{L}<1$.
		
		\item \underline{Output:} degree distributions  $ \left \{\left ( \lambda^{(i)}(\cdot),\rho^{(i)}(\cdot), P_0^{(i)}\right )\right \}_{i=1}^L $ such that for every $ i\in \{1,2,\ldots,L\} $, the decoding threshold of the first $ i $ layers equals $\epsilon_i $.
	\end{itemize}

	\begin{enumerate}
		\item Choose any degree distributions $(\lambda^{(1)},\rho^{(1)})$ such that $\epsilon^*(\lambda^{(1)},\rho^{(1)})=\epsilon_{1}$.
		\item For each $ i\in\{1,2,\ldots,L-1\} $ do:
		\begin{enumerate}
			\item \label{it:fixpoint}Calculate $ \left (x_s^{(1)},x_s^{(2)},\ldots,x_s^{(i)}\right )$ as the largest (element-wise) fixed point of 
			
			$F^{(i)}\left (\epsilon_{i+1},x^{(1)},x^{(2)},\ldots,x^{(i)},1,1,\ldots,1\right )  $.
			\item \label{it:as} Calculate $ a_s^{(i)}=\prod_{j=1}^i \Lambda_j\left (1-\rho_j\left (1-x_s^{(j)}\right )\right )$ (quantifies the amount by which layers $ 1,2,\ldots,i $ help when decoding  $ i+1 $ layers).
			\item \label{it:anyDD} Choose any degree distributions $(\lambda_{i+1},\rho_{i+1})$ that induce a threshold $\epsilon_{i+1}\cdot a_s^{(i)}$.
			\item Set $ P_0^{(i+1)}=\epsilon_i/\epsilon_{i+1} $.
		\end{enumerate}
	\end{enumerate}
\end{construction}

\begin{lemma} \label{lemma:Multigap2cap}
	Let $ \delta $ be the additive gap to capacity of the above ensemble. Then,
	\begin{align} \label{eq:Multidelta}
	  \delta \leq \sum_{i=1}^L \delta_i\cdot(1-P_0^{(i)}),
	\end{align}
	where $\delta_i  $ is the individual additive gap to capacity of the $ i $th layer $ (\lambda^{(i)},\rho^{(i)}) $.
\end{lemma}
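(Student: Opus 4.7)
The plan is to mimic the proof of Lemma~\ref{lemma:gap2cap} but to track all $L$ layers simultaneously via a telescoping identity. First, I would write out the design rate of the $L$-layer ensemble as the natural generalization of \eqref{eq:design rate2}, namely
\begin{align*}
R \;=\; 1 - \sum_{i=1}^{L}\frac{\int_0^1 \rho^{(i)}(x)\,\mathrm{d}x}{\int_0^1 \lambda^{(i)}(x)\,\mathrm{d}x}\bigl(1-P_0^{(i)}\bigr)
\;=\; 1-\sum_{i=1}^{L}(1-R_i)\bigl(1-P_0^{(i)}\bigr),
\end{align*}
where $R_i=1-\int\rho^{(i)}/\!\int\lambda^{(i)}$ is the design rate of the standalone layer-$i$ ensemble. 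This is the step where I expect nontrivial bookkeeping, but it follows from a direct edge count: the number of type-$i$ check nodes is controlled by the type-$i$ edge degrees on the variable-node side, which has a fraction $(1-P_0^{(i)})$ of variable nodes participating.

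Next, I would use step~(\ref{it:anyDD}) of Construction~\ref{Const:multi}, which sets the standalone threshold of $(\lambda^{(i)},\rho^{(i)})$ to $\epsilon_i\cdot a_s^{(i-1)}$ (with the convention $a_s^{(0)}=1$), so that $1-R_i = \epsilon_i\cdot a_s^{(i-1)} + \delta_i$. Substituting this into the rate expression and using $P_0^{(i)}=\epsilon_{i-1}/\epsilon_i$ for $i\geq 2$ (with $P_0^{(1)}=0$), I obtain
\begin{align*}
\delta \;=\; 1-\epsilon_L-R \;=\; \Bigl(-\epsilon_L + \epsilon_1 + \sum_{i=2}^{L} a_s^{(i-1)}(\epsilon_i-\epsilon_{i-1})\Bigr) + \sum_{i=1}^{L}\delta_i\bigl(1-P_0^{(i)}\bigr).
\end{align*}

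The final step is to show the bracketed term is nonpositive. Since each factor in the product defining $a_s^{(i)}$ in step~(\ref{it:as}) lies in $[0,1]$, we have $a_s^{(i-1)}\leq 1$ for every $i\geq 2$. Consequently,
\begin{align*}
\epsilon_1 + \sum_{i=2}^{L} a_s^{(i-1)}(\epsilon_i-\epsilon_{i-1})
\;\leq\; \epsilon_1 + \sum_{i=2}^{L}(\epsilon_i-\epsilon_{i-1}) \;=\; \epsilon_L,
\end{align*}
which telescopes. This yields \eqref{eq:Multidelta}. The main obstacle, apart from establishing the multi-layer design-rate formula cleanly, is verifying that $a_s^{(i)}\leq 1$ in the multi-layer setting (which is immediate from its product form in step~(\ref{it:as})) and confirming that the fixed point in step~(\ref{it:fixpoint}) yields the correct ``residual erasure'' interpretation that makes step~(\ref{it:anyDD}) give a standalone threshold of exactly $\epsilon_{i+1}\cdot a_s^{(i)}$; both rely on a direct generalization of the arguments used for $L=2$ in Theorem~\ref{th:C1}.
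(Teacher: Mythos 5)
Your proof is correct and follows essentially the same approach as the paper's: expand the multi-layer design rate, substitute the threshold identity $1-R_i=\epsilon_i a_s^{(i-1)}+\delta_i$ and $P_0^{(i)}=\epsilon_{i-1}/\epsilon_i$, and bound the residual term by noting $a_s^{(i-1)}\leq 1$ so the sum $\sum_{i=2}^L(a_s^{(i-1)}-1)(\epsilon_i-\epsilon_{i-1})\leq 0$ telescopes. No gaps.
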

\begin{proof}
	See Appendix~\ref{App:Multigap2cap}.
\end{proof}
The advantage of the design approach for suggested in this paper is made more prominent in view of Construction~\ref{Const:multi}. For each layer we need only to calculate steps \ref{it:fixpoint}--\ref{it:as}, and then choose any code that meets the criteria in step \ref{it:anyDD}. On the other hand, in the construction of \cite{RazaYu07}, linear programming is used to optimize the \emph{product} degree distribution of all layers; thus the complexity of Construction~\ref{Const:multi} is much smaller.

\section{Reducing The Number of Type-2 Iterations}
\label{Sec:N_JI}
We now return to the specific case of two layers.

It has not been emphasized earlier in the paper, but in practical settings, the type-1 and type-2 decoding iterations may be very different in terms of cost. For example, the hardware that implements the layer-2 checks may be more costly to operate due to higher code complexity. 
That means that even when decoding the two layers, we would like to reduce the number of layer-2 iterations. 
We define the number of layer-2 iterations performed during decoding by $N_2$, where a layer-2 decoding iteration is a round of variable-to-type-2-check messages and type-2-check-to-variable messages.
Ideally, the decoder successfully decodes on the type-1 sub-graph, and no type-2 iterations are needed ($N_{2}=0$); in the asymptotic regime, this happens when the fraction of erased bits $ \epsilon $ is equal or less than the type-1 threshold, i.e., $\epsilon \leq \epsilon_1$. However, if $\epsilon>\epsilon_1$, then at least one type-2 iteration is necessary ($ N_{2}\geq 1$). 

In this section, we suggest a scheduling scheme for updating layer 2 during the decoding of the entire graph. We prove that our scheduling scheme is optimal in the sense of minimizing $N_{2}$. 
It is known that there is a trade-off between rate and the number of decoding iterations (see, for example, \cite[Table \Romannum{3}]{WangRangWesel19}). We extend this observation and study how the parameters of the type-1 and type-2 degree distributions affect $ N_{2}$ when using the optimal scheduling scheme.
Note that our notion of scheduling differs from the standard meaning of scheduling algorithms for iterative-decoding (see \cite{ZhangFossorier02, XiaoBeni04, CasGriotWesel07}). We consider scheduling of type-2 decoding iterations, while previous work considered the order of message passing between nodes in the Tanner graph.

\subsection{An $ N_{2}$-optimal scheduling scheme}
\label{sub:opt schedule}
Recall that in the bilayer density-evolution equations, \eqref{eq:2D-DE1} and \eqref{eq:2D-DE2} express a type-1 and a type-2 iteration, respectively.
A scheduling scheme prescribes decoder access to the type-2 check nodes in only part of the iterations, and thus replaces \eqref{eq:2D-DE2} with 
\begin{align}\label{eq:schedule}
y_l = \left\{
\begin{array}{ll}
 g\left(\epsilon,x_{l-1},y_{l-1}\right) & l \in A \\
y_{l-1} & l \notin A
\end{array}
\right.
\end{align}
for some $A \subseteq \mathbb{N}$ representing the iteration numbers where type-2 checks are accessed; in this case we have
$  N_{2}=\vert A \vert.$
Since Lemma~\ref{lemma:mono of DE} (monotonicity) still holds when \eqref{eq:2D-DE2} is replaced with \eqref{eq:schedule}, the limits $\lim_{l \to \infty}x_l$ and $\lim_{l \to \infty}y_l$ exist for every scheduling scheme.

Given type-1 and type-2 degree distributions, a scheduling scheme is called \emph{valid} if for every $\epsilon$ less than the ensemble's threshold $ \epsilon_2$, $\lim_{l \to \infty}x_l(\epsilon)=0$ (successful decoding). Our goal is to find an \emph{optimal} scheduling scheme: a valid scheduling scheme that minimizes $ N_{2}$. For example, if $A=\emptyset$, then $ N_{2}=0$ but $\lim_{l \to \infty}x_l(\epsilon)>0$ if $\epsilon \in (\epsilon_1, \epsilon_2)$; thus, the scheduling scheme is not valid. If, on the other hand, type-2 checks are accessed in every iteration (as assumed in Sections~\ref{Sec:2D DE}--\ref{Sec:C1}), then the scheduling scheme is valid, but $N_{2}$ equals the total number of iterations, which is the worst case. We do not require the scheduling scheme to be pre-determined, and it can use ``on-line" information about the decoding process. For example, it can use the current fraction of erasure messages or the change in this fraction between two consecutive iterations. 
\begin{definition} \label{def:eff epsilon}
Let $(\Lambda^{(2)},\rho^{(2)})$ be type-2 degree-distribution polynomials, let $\epsilon\in (0,1)$ be the erasure probability of a BEC, and let $y \in [0,1]$ be an instantaneous erasure probability of a type-2 edge. We define the effective erasure probability from layer 1's perspective as
\begin{align} \label{eq:eff loc eps}
\epsilon^{(1)}_{\mathrm{eff}}(\epsilon,y) = \epsilon \cdot \Lambda^{(2)}\left(1-\rho^{(2)}(1-y)\right).
\end{align}
\end{definition}
In view of \eqref{eq:f} and \eqref{eq:eff loc eps}, we have
\begin{align} \label{eq:f with eff}
x_l = f(\epsilon,x_{l-1},y_{l-1})=\epsilon^{(1)}_{\mathrm{eff}}(\epsilon,y_{l-1}) \lambda^{(1)}(1-\rho^{(1)}(1-x_{l-1})).
\end{align}
$\epsilon^{(1)}_{\mathrm{eff}}(\epsilon,y_{l-1}) $ takes the role of $\epsilon$ when layer 1 is viewed as a standard LDPC code, hence the term ``effective erasure probability from layer 1's perspective".

Our proposed scheduling scheme is parameterized by $\eta>0$, and is given by
\begin{align}\label{eq:opt schedule}
\begin{split}
&x_l=f(\epsilon,x_{l-1},y_{l-1}) \\
&y_l = \left\{
\begin{array}{ll}
g\left(\epsilon,x_{l-1},y_{l-1}\right) & \left| x_{l-2}-x_{l-1}\right|\leq \eta \text{ and } \epsilon^{(1)}_{\mathrm{eff}}(,\epsilon,y_{l-1})\geq \epsilon_1 \\
y_{l-1} & \text{else}
\end{array}
\right. .
\end{split} 
\end{align}
\begin{lemma} \label{lemma:valid opt schedule}
For every $\eta>0$, the scheduling scheme described in \eqref{eq:opt schedule} is valid.
\end{lemma}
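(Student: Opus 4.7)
The plan is to show that both $\{x_l\}$ and $\{y_l\}$ still converge under the modified schedule \eqref{eq:opt schedule}, and then split the analysis into two cases depending on whether layer-2 is updated infinitely often or only finitely often. In the former case the pair of limits will be forced to be an $(f,g)$-fixed point, and Theorem~\ref{theorem:fix pt char} closes the argument. In the latter case, the dynamics of $x_l$ will eventually reduce to a one-layer BP recursion whose effective erasure rate is below $\epsilon_1$, precisely because the scheduling rule chose not to fire.

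First I would verify monotonicity by induction, adapting Lemma~\ref{lemma:mono of DE}. Since $g(\epsilon,x_{l-1},y_{l-1}) \leq y_{l-1}$ and the alternative rule sets $y_l = y_{l-1}$, in either branch $y_l \leq y_{l-1}$; combined with Lemma~\ref{lemma:monotonicity} this propagates $x_l \leq x_{l-1}$. Hence the limits $x^* = \lim_l x_l$ and $y^* = \lim_l y_l$ exist. Let $A = \{l \geq 0 : y_l \neq y_{l-1}\}$.

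Case 1: $|A| = \infty$. Along the subsequence $l \in A$ we have $y_l = g(\epsilon, x_{l-1}, y_{l-1})$, so by continuity of $g$, $y^* = g(\epsilon, x^*, y^*)$. Since $x_l = f(\epsilon,x_{l-1},y_{l-1})$ for every $l$, we also have $x^* = f(\epsilon, x^*, y^*)$. Thus $(x^*, y^*)$ is an $(f,g)$-fixed point, and since $\epsilon < \epsilon_2^*$, Theorem~\ref{theorem:fix pt char} forces $x^* = 0$. Case 2: $|A| < \infty$. Pick $L_0$ beyond the largest element of $A$; then $y_l = y^*$ for $l \geq L_0$, and for every $l > L_0$ the scheduling condition fails, i.e., either $|x_{l-1}-x_{l-2}| > \eta$ or $\epsilon^{(1)}_{\mathrm{eff}}(\epsilon, y^*) < \epsilon_1$. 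Because $x_l$ converges, $|x_{l-1} - x_{l-2}| \to 0$, so for $l$ large the first inequality cannot hold; therefore the failure must come from the second condition. As that condition is independent of $l$, $\epsilon^{(1)}_{\mathrm{eff}}(\epsilon, y^*) < \epsilon_1$. For $l > L_0$ the recursion then reduces to
\[
x_l \;=\; \epsilon^{(1)}_{\mathrm{eff}}(\epsilon, y^*)\,\lambda^{(1)}\!\bigl(1-\rho^{(1)}(1-x_{l-1})\bigr),
\]
which is ordinary single-layer density evolution at effective erasure rate strictly below $\epsilon_1$. By monotonicity and sandwiching between this sequence started at $x_{L_0} \leq 1$ and the one started at $1$ (which tends to $0$ by definition of $\epsilon_1$), we get $x^* = 0$.

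The main obstacle is Case 2: one has to carefully rule out the possibility that the layer-2 update is suppressed solely by the oscillation guard $|x_{l-1}-x_{l-2}| > \eta$, using convergence of $x_l$, and then legitimately invoke the layer-1 threshold property for a recursion initialized at $x_{L_0}$ rather than at $1$. Case 1 is essentially a direct application of the fixed-point characterization already established in Theorem~\ref{theorem:fix pt char}, so no additional machinery is needed there.
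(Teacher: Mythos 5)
Your proof is correct and rests on the same key ingredient as the paper's: the fixed-point characterization of $\epsilon_2^*$ in Theorem~\ref{theorem:fix pt char}, together with monotonicity of the schedule-modified density-evolution sequences. The organizational route differs. The paper argues directly by contradiction: it assumes $\epsilon^{(1)}_{\mathrm{eff}}(\lim_l y_l)\geq\epsilon_1$, observes that for large $l$ both scheduling conditions are then satisfied (so the update fires for all large $l$), deduces that $(\lim x_l,\lim y_l)$ is a non-trivial $(f,g)$-fixed point, contradicts Theorem~\ref{theorem:fix pt char}, and then closes with Lemma~\ref{lemma:valid schedule}. Your case split on $|A|$ finite versus infinite makes this more explicit: Case~1 corresponds to the paper's contradiction branch but reaches $x^*=0$ directly rather than through a contradiction, and Case~2 --- the paper handles it implicitly, since under the paper's hypothesis the update necessarily fires infinitely often --- you establish by a direct appeal to the effective-erasure condition failing in the limit, which is precisely what Lemma~\ref{lemma:valid schedule} encodes. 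Both routes are legitimate; yours is a bit more verbose but also a bit more self-contained, since Case~1 does not need Lemma~\ref{lemma:valid schedule} at all.

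One small point worth tightening: the inequality $g(\epsilon,x_{l-1},y_{l-1})\leq y_{l-1}$ that you use to launch the monotonicity induction is not immediate, because $y_{l-1}$ may have been carried forward unchanged over several non-update iterations. To justify it one should trace back to the most recent $l'\in A$ (or to the initialization $y_{-1}=1$) and use $x_{l-1}\leq x_{l'-1}$, $y_{l-1}=y_{l'}\leq y_{l'-1}$ together with Lemma~\ref{lemma:monotonicity} to conclude $g(\epsilon,x_{l-1},y_{l-1})\leq g(\epsilon,x_{l'-1},y_{l'-1})=y_{l'}=y_{l-1}$. This is the same monotonicity claim the paper asserts without elaboration, so it is not a gap relative to the paper, but it is a step your write-up should not gloss over as if it were obvious from $g\leq 1$.
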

\begin{proof}
See Appendix~\ref{App:valid opt schedule}.
\end{proof}
Note that if $\eta=0$, the scheduling scheme described in \eqref{eq:opt schedule} is not valid. However, since type-1 iterations have zero cost in our model, we can assume that we can apply arbitrarily many type-1 iterations to get arbitrarily close to $\eta=0$. Numerical simulations show that $\eta=10^{-4}$ suffices for achieving minimal $N_{2}$. For the following analysis we will assume that $\eta=0$, and that the scheduling scheme is still valid. In this scheduling scheme, the decoder tries to decode the type-1 sub-graph until it gets ``stuck", which refers to not being able to reduce the erasure probability while it is still strictly greater than zero. This happens first when $x_{l_1}=x_s(\epsilon)$, for some iteration $l_1$, where $x_s(\epsilon)$ is given in Definition~\ref{def:h_eps}. So, in the first type-2 update we have
\begin{align*}
\begin{array}{ll}
x_{l_1} = x_s(\epsilon) ,& y_{l_1} =1 \\
x_{l_1+1} = x_s(\epsilon) ,& y_{l_1+1} =g(\epsilon,x_s(\epsilon),1). 
\end{array}
\end{align*}
In view of \eqref{eq:f with eff}, the type-1 sub-graph now ``sees" $\epsilon^{(1)}_{\mathrm{eff}}(\epsilon,y_{l_1+1})<\epsilon$ as an effective erasure probability, and it can continue the decoding algorithm without accessing type-2 edges. It may get ``stuck" again and another type-2 update will be invoked; this procedure continues until $\epsilon^{(1)}_{\mathrm{eff}}(\epsilon,y_{l_p+1}) < \epsilon_1$ in the $p$-th (and last) update, which enables successful decoding (i.e., $N_{2}=p$). In general, let $\{l_k\}_{k=1}^{N_{2}}$ be the type-2 update iterations of the scheduling scheme described above and let $\varepsilon_k $ be the effective erasure probability from layer-1 perspective between type-2 updates $k-1$ and $k$. Then,  
\begin{subequations}
\begin{align}
\label{eq:y_1}
&y_{l_1}=1,\;\varepsilon_1=\epsilon,\;x_{l_1}=x_s(\epsilon), \\
\label{eq:y_l_k}
&y_{l_k}=g\left (\epsilon,x_{l_{k-1}},y_{l_{k-1}}\right ),\quad 2\leq k \leq N_{2}, \\
\label{eq:eps_k}
&\varepsilon_k=\epsilon^{(1)}_{\mathrm{eff}}(\epsilon,y_{l_k}),\quad 2\leq k \leq N_{2}, \\
\label{eq:x_l_k}
&x_{l_k}=x_s(\varepsilon_k),\quad 2\leq k \leq N_{2},
\end{align}
\end{subequations}
where
\begin{align} \label{eq:eps_k mono}
\begin{split}
\epsilon=\varepsilon_1 > \varepsilon_2 > \ldots > \varepsilon_{N_{2}-1} \geq  \epsilon_1 >  \varepsilon_{N_{2}}.
\end{split}
\end{align}
\begin{lemma} \label{lemma:opt schedule}
The scheduling scheme described above is optimal. 
\end{lemma}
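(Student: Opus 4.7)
The plan is to show, by induction on $k$, that the proposed scheme (call it scheme $A$) is ``$y$-minimal'' among all valid scheduling schemes; optimality of $N_2$ then follows immediately. Concretely, fix an arbitrary valid scheme $B$ that performs at least $k$ type-2 updates, and let $\tilde{y}_k^A$ (resp.\ $\tilde{y}_k^B$) denote the value of $y$ right after the $k$-th type-2 update under scheme $A$ (resp.\ $B$). I will prove that $\tilde{y}_k^A \leq \tilde{y}_k^B$ for every $k\geq 1$. Granting this, suppose $B$ terminates successfully after $N_2^B$ updates. Since ``successful termination'' in the scheduling model means that the effective erasure rate seen by layer 1 drops below $\epsilon_1$ (so that the remaining type-1 iterations drive $x_l\to 0$), the inequality $\tilde{y}_{N_2^B}^A \leq \tilde{y}_{N_2^B}^B$ together with the monotonicity of $\epsilon^{(1)}_{\mathrm{eff}}(\epsilon,\cdot)$ forces $\varepsilon^A_{N_2^B+1} \leq \varepsilon^B_{N_2^B+1} < \epsilon_1$, i.e., scheme $A$ also terminates within at most $N_2^B$ type-2 updates, yielding $N_2^A \leq N_2^B$.

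The three monotonicity ingredients driving the induction are: (i) $g(\epsilon,x,y)$ is non-decreasing in $x$ and $y$ (Lemma~\ref{lemma:monotonicity}); (ii) the map $\epsilon\mapsto x_s(\epsilon)$ is non-decreasing (immediate from Definition~\ref{def:h_eps}, since $h_\epsilon(x)$ is pointwise non-decreasing in $\epsilon$); and (iii) if $y$ is held fixed at some value $\bar y$ across several consecutive iterations, then $x_l = f(\epsilon,x_{l-1},\bar y)$ is a layer-1 DE recursion with effective erasure $\epsilon^{(1)}_{\mathrm{eff}}(\epsilon,\bar y)$, which is monotonically non-increasing and lower-bounded by $x_s(\epsilon^{(1)}_{\mathrm{eff}}(\epsilon,\bar y))$ (Lemma~\ref{lemma:mono of DE} together with Definition~\ref{def:h_eps}).

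For the base case ($k=1$): in both schemes, $y\equiv 1$ until the first type-2 update. Scheme $A$ waits until $x$ reaches $x_s(\epsilon)$ exactly, so $\tilde{y}_1^A = g(\epsilon,x_s(\epsilon),1)$, while in $B$ the pre-update $x$-value, call it $\tilde{x}_1^B$, satisfies $\tilde{x}_1^B \geq x_s(\epsilon)$ by~(iii). Monotonicity~(i) gives $\tilde{y}_1^B \geq \tilde{y}_1^A$. For the inductive step, assume $\tilde{y}_k^A \leq \tilde{y}_k^B$. Between the $k$-th and $(k{+}1)$-th updates, $y$ is held at $\tilde{y}_k$ in each scheme, so by~(iii) the pre-update $x$-values satisfy $\tilde{x}_{k+1}^A = x_s(\varepsilon^A_{k+1})$ and $\tilde{x}_{k+1}^B \geq x_s(\varepsilon^B_{k+1})$, where $\varepsilon^\bullet_{k+1}=\epsilon^{(1)}_{\mathrm{eff}}(\epsilon,\tilde{y}_k^\bullet)$. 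Since $\tilde{y}_k^A \leq \tilde{y}_k^B$ implies $\varepsilon^A_{k+1} \leq \varepsilon^B_{k+1}$, monotonicity~(ii) gives $x_s(\varepsilon^A_{k+1}) \leq x_s(\varepsilon^B_{k+1})$, hence $\tilde{x}_{k+1}^A \leq \tilde{x}_{k+1}^B$. Applying~(i) to $g$ in both arguments then yields $\tilde{y}_{k+1}^A \leq \tilde{y}_{k+1}^B$.

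The subtle point\,---\,and the step where the $\eta\to 0$ idealization mentioned right before the lemma is used\,---\,is asserting that scheme $A$ actually \emph{attains} the fixed point $x_s(\varepsilon^A_{k+1})$ prior to the next update; with any $\eta>0$ one only gets $\tilde{x}_{k+1}^A \leq x_s(\varepsilon^A_{k+1})+\mathrm{o}(1)$, which still suffices once one passes to the limit. A secondary obstacle is verifying that in scheme $B$ the layer-1 iterations between consecutive type-2 updates indeed enter the regime where~(iii) applies, i.e., that the starting $x$-value $\tilde{x}_k^B$ lies above $x_s(\varepsilon^B_{k+1})$; this follows because $\varepsilon^B_{k+1} \leq \varepsilon^B_k$ and hence $x_s(\varepsilon^B_{k+1}) \leq x_s(\varepsilon^B_k) \leq \tilde{x}_k^B$, closing the induction.
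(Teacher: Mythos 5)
Your proof is correct and follows essentially the same route as the paper's: an auxiliary induction (the paper's Lemma~\ref{lemma:opt schdule induction}) that dominates the post-update erasure densities step by step, driven by the monotonicity of $g$ (Lemma~\ref{lemma:monotonicity}), the monotonicity of $x_s(\cdot)$ (the paper's Lemma~\ref{lemma:x_s monotonic}), and the monotonicity of $\epsilon^{(1)}_{\mathrm{eff}}(\epsilon,\cdot)$, followed by reading off $N_2$-optimality via the validity criterion of Lemma~\ref{lemma:valid schedule}. The only cosmetic differences are that you track post-update $y$-values and conclude directly, whereas the paper tracks pre-update $(x,y)$ pairs and closes with a one-line contradiction; your explicit remarks about the $\eta\to 0$ idealization and about verifying that scheme $B$ stays above the running fixed point are subtleties the paper leaves implicit, and they are handled correctly.
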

\begin{proof}
See Appendix~\ref{App:opt schedule}.
\end{proof}
We assume from now on that the decoder applies the optimal scheduling scheme suggested above.

\subsection{The Rate-vs.-$N_{2}$ Trade-Off}

We will now see that the smaller the gap to capacity is, the higher $N_{2}$ is; therefore, to decrease $N_{2}$ we have to pay with rate, and there are several ways to do so. In this section we study how the parameters of the component layers affect $N_{2}$. In particular, we focus on how the type-1 and type-2 additive gaps to capacity $\delta_1$ and $\delta_2$, receptively, affect $N_{2}$.  

It is well known that if $\left\{ \lambda^{(k)},\rho^{(k)}\right\}^{\infty}_{k=1}$ is a (ordinary) capacity-approaching sequence for the BEC($\epsilon$), then 
\begin{align}\label{eq:c.a. known}
\lim_{k \to \infty} \epsilon\lambda^{(k)}(1-\rho^{(k)}(1-x))=x,\quad x \in [0,\epsilon]
\end{align}
(see \cite{Shok01}).
This leads to the following lemma.
\begin{lemma} \label{lemma:x_s c.a.}
Let  $0<\epsilon_1<\epsilon<1$, and let $\left\{ \lambda^{(1;k)},\rho^{(1;k)}\right\}^{\infty}_{k=1}$ be a capacity-approaching sequence for the BEC($\epsilon_1$). Then,
\begin{align*}
x_s(\epsilon)\triangleq\lim_{k \to \infty}x^{(k)}_s(\epsilon)=\epsilon,
\end{align*}
where $x^{(k)}_s(\epsilon)$ corresponds to Definition~\ref{def:h_eps} with $\left(\lambda^{(1;k)},\rho^{(1;k)}\right)$.
\end{lemma}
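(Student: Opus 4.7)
The plan is to establish the limit by squeezing $x_s^{(k)}(\epsilon)$ between matching upper and lower bounds that follow from the capacity-approaching property \eqref{eq:c.a. known}.

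For the upper bound, observe that for every $x>\epsilon$ and every $k$, $\epsilon\lambda^{(1;k)}(1-\rho^{(1;k)}(1-x))\leq\epsilon<x$, so $h_\epsilon^{(k)}(x)<0$. Consequently $x_s^{(k)}(\epsilon)\leq\epsilon$ uniformly in $k$, which yields $\limsup_{k\to\infty}x_s^{(k)}(\epsilon)\leq\epsilon$.

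For the lower bound, I would fix an arbitrary $x\in(0,\epsilon)$ and show that $h_\epsilon^{(k)}(x)>0$ for all sufficiently large $k$, whence $x_s^{(k)}(\epsilon)\geq x$. Split into two cases according to whether $x\leq\epsilon_1$ or $x>\epsilon_1$:
\begin{itemize}
\item If $x\leq\epsilon_1$, apply \eqref{eq:c.a. known} directly to the BEC($\epsilon_1$) capacity-approaching sequence to get $\lim_{k\to\infty}\epsilon_1\lambda^{(1;k)}(1-\rho^{(1;k)}(1-x))=x$, i.e., $\lambda^{(1;k)}(1-\rho^{(1;k)}(1-x))\to x/\epsilon_1$. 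Multiplying by $\epsilon$ gives a limit of $\epsilon x/\epsilon_1>x$ because $\epsilon>\epsilon_1$, so $h_\epsilon^{(k)}(x)>0$ eventually.
\item If $x\in(\epsilon_1,\epsilon)$, use that $x\mapsto\lambda^{(1;k)}(1-\rho^{(1;k)}(1-x))$ is monotonically non-decreasing (as a composition of non-decreasing polynomials with non-negative coefficients), so
\begin{align*}
\lambda^{(1;k)}(1-\rho^{(1;k)}(1-x))\;\geq\;\lambda^{(1;k)}(1-\rho^{(1;k)}(1-\epsilon_1))\;\xrightarrow{k\to\infty}\;1,
\end{align*}
where the limit uses \eqref{eq:c.a. known} at the endpoint $x=\epsilon_1$. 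Since this quantity is also bounded above by $1$, it tends to $1$, and thus $\epsilon\lambda^{(1;k)}(1-\rho^{(1;k)}(1-x))\to\epsilon>x$, again forcing $h_\epsilon^{(k)}(x)>0$ for large $k$.
\end{itemize}
Either way, $\liminf_{k\to\infty}x_s^{(k)}(\epsilon)\geq x$ for every $x<\epsilon$, hence $\liminf_{k\to\infty}x_s^{(k)}(\epsilon)\geq\epsilon$. Combining with the upper bound closes the proof.

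The main subtlety is that \eqref{eq:c.a. known} is stated only on $[0,\epsilon_1]$, while we need information on the full interval $[0,\epsilon)\supsetneq[0,\epsilon_1]$. This is the only delicate point, and it is resolved by the monotonicity argument in the second bullet above: the endpoint limit at $x=\epsilon_1$ forces the whole function to approach the constant $1$ on $[\epsilon_1,1]$ pointwise. No further technical work is required.
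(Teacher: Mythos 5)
Your proof is correct and takes essentially the same route as the paper's (Appendix G): upper-bound $x_s^{(k)}(\epsilon)\leq\epsilon$ because $h^{(k)}_\epsilon(x)<0$ for $x>\epsilon$, and lower-bound by showing $h^{(k)}_\epsilon(x)>0$ eventually for each fixed $x<\epsilon$. The one place you are actually more explicit than the paper is the case $x\in(\epsilon_1,\epsilon)$: the paper simply asserts the pointwise limit $h_\epsilon(x)=\epsilon-x$ on $[\epsilon_1,\epsilon]$ as a consequence of \eqref{eq:c.a. known}, even though \eqref{eq:c.a. known} (applied with the parameter $\epsilon_1$) only speaks to $[0,\epsilon_1]$; your monotonicity-plus-endpoint argument (the map $x\mapsto\lambda^{(1;k)}(1-\rho^{(1;k)}(1-x))$ is nondecreasing and bounded by $1$, and $\to 1$ at $x=\epsilon_1$) is precisely the missing justification. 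So your proof supplies a small but genuine detail that the paper glosses over.
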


\begin{proof}
See Appendix~\ref{App:x_s c.a.}.
\end{proof}

Lemma~\ref{lemma:x_s c.a.} asserts that if the type-1 degree-distribution polynomials imply a threshold $\epsilon_1$ and a design rate that is very close to capacity ($1-\epsilon_1$), and the channel erasure probability $\epsilon$ is greater than $\epsilon_1$, then the BP decoding algorithm on the type-1 sub-graph gets ``stuck" immediately after correcting only a small fraction of the erasures. This leads, in view of \eqref{eq:y_l_k}, to a small change in the erasure-message probability on the type-2 update, which in turn yields a minor progress in the type-1 side. Therefore, choosing close to capacity \emph{type-1} degree-distribution polynomials implies high $N_{2}$.
Another consequence of \eqref{eq:c.a. known} is that the change in the erasure-message probability in one iteration of the BP decoding algorithm is small. Thus, close to capacity \emph{type-2} degree-distribution polynomials yield high $N_{2}$, regardless of the \emph{type-1} degree-distribution polynomials.

\begin{example} \label{ex:N_JI}
Let $\epsilon_1=0.05$ and $\epsilon_2=0.2$. We use the capacity-achieving sequence given in Example~\ref{ex:c.a.}. 
A computer program simulated \eqref{eq:y_1}-\eqref{eq:x_l_k} with $ \epsilon=0.1998 $ ($ 99.9\% $ of $ \epsilon_2 $) and degree distributions from \eqref{eq:Tornado} with the same values of $D_1$ and $D_2$ as in Figure~\ref{Fig:CA}. The results are presented in Figure~\ref{Fig:EL05P1EG2}. Figure~\ref{Fig:CA} and Figure~\ref{Fig:EL05P1EG2} exemplify the trade-off between rate and $N_{2}$: when the ensemble is close to capacity with $\delta_1=10^{-2},\delta_2=2.5\cdot10^{-4}$ ($ D_1=5,D_2=800,R=0.79$), we get $N_{2}=570$, and to reduce $N_{2}$ we have to pay with rate. However, there are several ways to do so. For example, changing the type-1 gap to $\delta_1=5\cdot10^{-2}$ ($ D_1=1 $) while the type-2 gap stays $\delta_2=2.5\cdot10^{-4}$ (red circle labeled $A$ in the plot) yields $R=0.75$ and $N_{2}=26$, and changing the type-1 and type-2 gap to $\delta_1=2.5\cdot10^{-2}$ ($ D_1=2 $) and $\delta_2=4\cdot10^{-2}$ ($ D_2=5 $), respectively (blue circle labeled $B$), yields the same $R=0.75$ but a smaller $N_{2}=11$.

\begin{figure}[!h]
\begin{center}
\begin{tikzpicture}

\begin{axis}[%
width=5in,
height=2in,
scale only axis,
xmode=log,
xmin=0.0002,
xmax=0.2,
xminorticks=true,
xlabel={$\delta_2$},
ymin=0,
ymax=600,
ylabel={$N_{2} $},
axis background/.style={fill=white},
xmajorgrids,
xminorgrids,
ymajorgrids,
legend style={
	legend cell align=left, 
	align=left, 
	draw, 
	fill=white,
	at={(1,1)},
	anchor=north east}
]
\addplot [dotted, line width=1.0pt,mark=o,mark options={solid}]
table[row sep=crcr]{%
	0.197254817033647	7\\
	0.0998340372099368	11\\
	0.0666388095230959	15\\
	0.0399973574944374	22\\
	0.0199999040049642	39\\
	0.0099999967653871	73\\
	0.00571428551077802	120\\
	0.0039999999653707	160\\
	0.00199999999893696	256\\
	0.000999999999992784	363\\
	0.00050000000003414	469\\
	0.000250000000009409	568\\
};
\addlegendentry{$\delta_1=0.01$}
\addplot [dashed, line width=1.0pt,mark=o,mark options={solid}]
table[row sep=crcr]{%
	0.197254817033647	5\\
	0.0998340372099368	7\\
	0.0666388095230959	8\\
	0.0399973574944374	11\\
	0.0199999040049642	16\\
	0.0099999967653871	25\\
	0.00571428551077802	34\\
	0.0039999999653707	40\\
	0.00199999999893696	56\\
	0.000999999999992784	73\\
	0.00050000000003414	90\\
	0.000250000000009409	107\\
};
\addlegendentry{$\delta_1=0.025$}

\addplot [line width=1.0pt,mark=o]
  table[row sep=crcr]{%
0.197254817033647	4\\
0.0998340372099368	5\\
0.0666388095230959	5\\
0.0399973574944374	7\\
0.0199999040049642	8\\
0.0099999967653871	11\\
0.00571428551077802	13\\
0.0039999999653707	14\\
0.00199999999893696	17\\
0.000999999999992784	20\\
0.00050000000003414	23\\
0.000250000000009409	26\\
};
\addlegendentry{$\delta_1=0.05$}

\node [circle,fill=red,scale=0.3] (a) at (axis cs:0.000250000000009409,26) {};

\node(A)[above right=16mm,inner sep=0] at (axis cs:0.000250000000009409,26) {\textcolor{red}{A}};
\draw[thick,->,>=latex,red] (A)--(a);

\node [circle,fill=blue,scale=0.3] (b) at (axis cs:0.0399973574944374,11) {};

\node(B)[above right=16mm,inner sep=0] at (axis cs:0.0399973574944374,11) {\textcolor{blue}{B}};
\draw[thick,->,>=latex,blue] (B)--(b); 

\end{axis}

\end{tikzpicture}%
\caption{\label{Fig:EL05P1EG2}
Plot of $N_{2}$ as a function of the type-2 additive gap to capacity $\delta_2$ for different values of the type-1 additive gap to capacity $\delta_1$. The simulated erasure rate is $ \epsilon=0.1998 $ ($ 99.9\% $ of $ \epsilon_2=0.2 $).}
\end{center}
\end{figure}
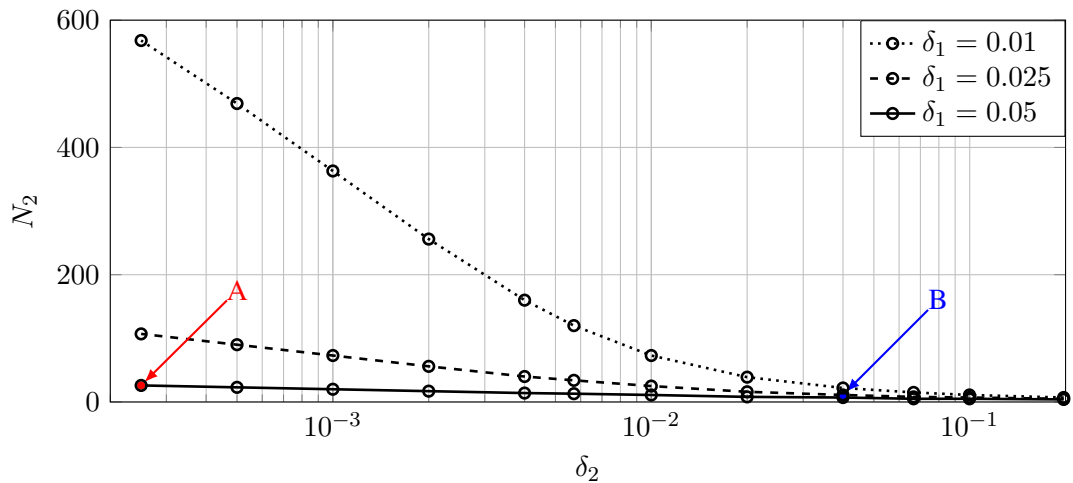
\end{example}

\section{Summary}
\label{Sec:sum}
This paper develops new tools for the construction and analysis of bilayer (and more generally multi-layer) LDPC code ensembles. 
In particular, we derived a code-analysis framework which resulted in a simple way to construct capacity-achieving sequences.
Our design approach lends itself well for an extension to multi-layer code construction without complexity blow-up.
We showed that using this design framework can give codes that enjoy low-complexity layer-1 decoding in low error rates, while still having small gaps to capacity for decoding layers 1+2 in high error rates. Since in some applications it is of interest that the decoding algorithm will avoid layer-2 messages as much as possible, we studied another trade-off regarding the number of layer-2 iterations and the gap to capacity. 

Interesting future work includes combining the asymptotic design techniques with finite-block design techniques for the BEC and other channels. In addition, one can study generalizations of the decoding modes in the $L$-layer framework. Instead of considering $L$ modes: layer 1, layers 1+2,$\ldots$, layers 1+2+$\dots$+$L$ , one can consider other subsets, for example layers 2+3+5+8. In this case, it would be interesting to use the design approach suggested in this paper to optimize the rate and decoding thresholds.

\section{Acknowledgment}
We thank the AE and anonymous referees of a previous version of this paper for valuable comments that improved the presentation considerably.

\appendices 
\section{Proof of Lemma~\ref{lemma:fix pt}} \label{App:fix pt}
\begin{enumerate}
\item Assume that $x=0$. Since $\lambda^{(1)}(0)=0$, \eqref{eq:g} implies that $y=g(\epsilon,0,y)=0$. Moreover, if $y=0$ and $P_0=0$, then \eqref{eq:f} yields $x=f(\epsilon,x,0)=0$. Finally, if $y=0$ and $\lambda^{(2)}(0)>0$, then \eqref{eq:g} implies that $0=\lambda^{(1)}(1-\rho^{(1)}(1-x))$, hence $\rho^{(1)}(1-x)=1$ and $x=0$. 
\item Follows immediately from \eqref{eq:f}, \eqref{eq:g} and \eqref{eq:f,g fixed}.
\item We prove \eqref{eq:no jumps} by a mathematical induction. For $l=0$, \eqref{eq:no jumps} holds due to Item~\ref{item: fix pt lemma1} and the fact that $x_{0} =y_{0}=\epsilon $. Assume correctness of \eqref{eq:no jumps} for some $l \geq 0$ and consider iteration $l+1$. In view of Lemma~\ref{lemma:monotonicity}, \eqref{eq:2D-DE1}--\eqref{eq:2D-DE3} and the induction assumption, it follows that
\begin{align} \label{eq:proof fix pt lemma}
\begin{split}
&x_{l+1} = f \left(\epsilon,x_{l},y_{l}\right) \geq f(\epsilon,x,y)= x , \\
&y_{l+1} = g\left(\epsilon,x_{l},y_{l}\right) \geq g(\epsilon,x,y)= y.
\end{split}
\end{align}
This prove correctness of \eqref{eq:no jumps} for $l+1$ and by mathematical induction proves \eqref{eq:no jumps} for all $l \geq 0$.
\end{enumerate}

\section{Proof of Lemma~\ref{lemma:q_L(0)}} \label{App:q_L(0)}
Let $I=\min \{i \colon \Lambda^{(1)}_{i}>0\}$ be the first non-zero coeficient of $ \Lambda^{(1)}(\cdot) $, and let $C=\tfrac{\mathrm{d}}{\mathrm{d}u}\Lambda^{(1)}(u)\big|_{u=1}$. 
Clearly, $I \geq 2$. 
Since $ \lambda^{(1)}(u)=\frac1C\cdot{\tfrac{\mathrm{d}}{\mathrm{d}u}\Lambda^{(1)}(u)} $, then
\begin{align}
\lim\limits_{u \to 0}\frac{\Lambda^{(1)}(u)}{\lambda^{(1)}(u)}
&=\lim\limits_{u \to 0}\frac{\Lambda^{(1)}(u)}{\tfrac{\mathrm{d}}{\mathrm{d}u}\Lambda^{(1)}(u)}\cdot C\notag\\
&=C\cdot\lim\limits_{u \to 0}\frac{\sum_{i\geq I}\Lambda^{(1)}_{i} u^i}{\sum_{i\geq I}i\Lambda^{(1)}_{i} u^{i-1}} \notag\\ 
&=C\cdot\lim\limits_{u \to 0}\frac{u^I\sum_{i\geq I}\Lambda^{(1)}_{i} u^{i-I}}{u^{I-1}\sum_{i\geq I}i\Lambda^{(1)}_{i} u^{i-I}} \notag\\
&=C\cdot\frac{\Lambda^{(1)}_{I} }{I\Lambda^{(1)}_{I} }\cdot\lim\limits_{u \to 0}u \notag\\
&=0.
\end{align}
Further, let $u(x)=1-\rho^{(1)}(1-x)$ and note that $\lim_{x\to 0} u(x)=0$. Thus,
\begin{align*}
\lim_{x \to 0} x \cdot \frac{\Lambda^{(1)}\left( 1- \rho^{(1)}\left(1-x \right)\right)}{\lambda^{(1)}\left( 1- \rho^{(1)}\left(1-x \right) \right)} =\lim_{x \to 0}x \cdot \lim_{x \to 0}\frac{\Lambda^{(1)}(u(x))}{\lambda^{(1)}(u(x))}=\lim_{x \to 0}x \cdot \lim_{u \to 0}\frac{\Lambda^{(1)}(u)}{\lambda^{(1)}(u)} = 0.
\end{align*}

\section{Proof of Theorem~\ref{th:numerical th}} \label{App:numerical th}

\begin{lemma} \label{lemma: fixed curve}
	If $(x,y)$ is an $(f,g)$-fixed point with $y>0$, then $x \leq q(y)$.
\end{lemma}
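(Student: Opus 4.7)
The plan is to derive the identity $q_1(x)=q_2(y)$ directly from the two fixed-point equations by eliminating $\epsilon$, and then invoke the definition of $q(y)$.

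First I would use Lemma~\ref{lemma:fix pt}, item~\ref{item: fix pt lemma0}: its contrapositive tells us that $y>0$ forces $x>0$. Having $x>0$ and $y>0$ at a fixed point implies, via \eqref{eq:f} and \eqref{eq:g}, that both $\lambda^{(1)}(1-\rho^{(1)}(1-x))>0$ and $\lambda^{(2)}(1-\rho^{(2)}(1-y))>0$, because otherwise the RHS of \eqref{eq:f,g fixed} would vanish. Next I would write out the two components of \eqref{eq:f,g fixed} and divide one by the other to cancel $\epsilon$: the ratio $x/y$ equals
\begin{align*}
\frac{\lambda^{(1)}(1-\rho^{(1)}(1-x))}{\Lambda^{(1)}(1-\rho^{(1)}(1-x))}\cdot\frac{\Lambda^{(2)}(1-\rho^{(2)}(1-y))}{\lambda^{(2)}(1-\rho^{(2)}(1-y))}.
\end{align*}
Cross-multiplying and regrouping by layer yields exactly $q_1(x)=q_2(y)$ in the notation of \eqref{eq:q_L q_J}.

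Before applying Definition~\ref{def:q}, I must check that $q_2(y)\leq 1$ so that $q(y)$ is well-defined. For this I would take the $x$-equation $x=\epsilon\,\lambda^{(1)}(\cdot)\,\Lambda^{(2)}(\cdot)$, multiply both sides by $\Lambda^{(1)}(1-\rho^{(1)}(1-x))/\lambda^{(1)}(1-\rho^{(1)}(1-x))$, and get
\begin{align*}
q_1(x)=\epsilon\cdot\Lambda^{(1)}(1-\rho^{(1)}(1-x))\cdot\Lambda^{(2)}(1-\rho^{(2)}(1-y))\leq\epsilon<1,
\end{align*}
since $\Lambda^{(i)}(u)\leq\Lambda^{(i)}(1)=1$ on $[0,1]$. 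Hence $q_2(y)=q_1(x)<1$ and $q(y)$ is defined.

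Having established $q_1(x)=q_2(y)\leq 1$, $x$ lies in the set $\{x'\in(0,1]\colon q_1(x')=q_2(y)\}$ whose maximum is $q(y)$ by Definition~\ref{def:q}, so $x\leq q(y)$ as required. The only subtle point is justifying that the denominators in the division step are strictly positive; this is handled cleanly because $x,y>0$ together with the fixed-point equations would otherwise force a contradiction (e.g.\ $y=0$). No compactness or IVT arguments are needed beyond what is already in place for defining $q(\cdot)$.
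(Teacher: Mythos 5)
Your proof is correct and follows essentially the same route as the paper: divide the two fixed-point equations to cancel $\epsilon$, deduce $q_1(x)=q_2(y)$, bound this common value by $\epsilon\,\Lambda^{(1)}(\cdot)\Lambda^{(2)}(\cdot)\leq 1$ using one of the fixed-point equations, and then invoke Definition~\ref{def:q}. The only cosmetic difference is that you substitute the $x$-equation to bound $q_1(x)$ while the paper substitutes the $y$-equation to bound $q_2(y)$ (yielding the same quantity), and you are slightly more explicit than the paper in first invoking Lemma~\ref{lemma:fix pt} to ensure $x>0$ so that $q_1(x)$ is well-defined before dividing.
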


\begin{proof}
	Let $\epsilon \in (0,1)$ and let $(x,y)$ be a solution to \eqref{eq:f,g fixed} with $y>0$. In view of \eqref{eq:f} and \eqref{eq:g}, dividing the first equation of \eqref{eq:f,g fixed} with the second one yields
	\begin{align}\label{eq:divide f,g fixed}
	\frac{x}{y} = \frac{\lambda^{(1)}\left(1- \rho^{(1)}\left(1-x \right)\right) \cdot \lambda^{(2)}\left(1- \rho^{(2)}\left(1-y\right)\right)}{\lambda^{(2)}\left(1- \rho^{(2)}\left(1-y \right)\right) \cdot \lambda^{(1)}\left(1- \rho^{(1)}\left(1-x\right)\right)}
	\end{align} 
	which after some rearrangements implies 
	\begin{align}\label{eq:q_L(x)=q_J(y)}
	q_2(y)=q_1(x),
	\end{align}\
	where $q_2(\cdot)$ and $q_1(\cdot)$ are defined in \eqref{eq:q_L q_J}.
	In view of \eqref{eq:q_L q_J}, since $(x,y)$ is an $(f,g)$-fixed point, then 
	\begin{align}\label{eq:q_J leq 1}
	q_2(y)&=y\cdot \frac{\lambda^{(2)}\left( 1- \rho^{(2)}\left(1-y \right)\right)}{\lambda^{(2)}\left( 1- \rho^{(2)}\left(1-y \right)\right)} \notag \\
	&=g(\epsilon,x,y)\cdot \frac{\lambda^{(2)}\left( 1- \rho^{(2)}\left(1-y \right)\right)}{\lambda^{(2)}\left( 1- \rho^{(2)}\left(1-y \right)\right)} \notag \\
	&=\overbrace{\epsilon}^{\leq 1}\cdot \overbrace{\lambda^{(2)}\left( 1- \rho^{(2)}\left(1-y \right)\right)}^{\leq 1} \cdot \overbrace{\lambda^{(1)}\left( 1- \rho^{(1)}\left(1-x \right)\right)}^{\leq 1}\notag \\
	&\leq 1,
	\end{align}
	which together with Definition~\ref{def:q} and \eqref{eq:q_L(x)=q_J(y)} completes the proof.
	
\end{proof}

Let
\begin{align} \label{eq:converse 01}
\epsilon >  \inf\limits_{\substack{y \in (0,1]  \\ q_2(y)\leq 1}} \frac{y}{g(1,q(y),y)}.
\end{align}
There exists $y_0 \in (0,1]$ such that $q_2(y_0) \leq 1$ and
\begin{align}\label{eq:y_0}
y_0=\epsilon\cdot g(1,q(y_0),y_0)= \epsilon \cdot \lambda^{(2)}(1-\rho^{(2)}(1-y_0))\cdot\lambda^{(1)}(1-\rho^{(1)}(1-q(y_0))).
\end{align}
In view of \eqref{eq:q def}, 
\begin{align} 
q_1(q(y_0))=q_2(y_0),
\end{align}
which combined with \eqref{eq:f} and \eqref{eq:q_L q_J} yields
\begin{align} \label{eq:using q}
q(y_0) &= \frac{\lambda^{(1)}(1-\rho^{(1)}(1-q(y_0)))}{\lambda^{(1)}(1-\rho^{(1)}(1-q(y_0)))} \cdot y_0 \cdot \frac{\lambda^{(2)}(1-\rho^{(2)}(1-y_0))}{\lambda^{(2)}(1-\rho^{(2)}(1-y_0))} \notag \\
&=\epsilon \cdot \lambda^{(1)}(1-\rho^{(1)}(1-q(y_0)))\cdot \lambda^{(2)}(1-\rho^{(2)}(1-y_0)) \notag \\
&= f(\epsilon,q(y_0),y_0).
\end{align}
Thus, $(q(y_0),y_0)$ is a non-zero $(f,g)$-fixed point, which in view of Theorem~\ref{theorem:fix pt char} implies that $\epsilon > \epsilon^*_2$. Hence,  
\begin{align} \label{eq:th 1st part}
\epsilon^*_2 \leq \inf\limits_{\substack{y \in (0,1]  \\ q_2(y)\leq 1}} \frac{y}{g(1,q(y),y)}.
\end{align}

Next, let 
\begin{align} \label{eq:direct}
\epsilon <  \inf\limits_{\substack{y \in (0,1]  \\ q_2(y)\leq 1}} \frac{y}{g(1,q(y),y)}
\end{align}
and let $(x,y)$ be a solution to \eqref{eq:f,g fixed}. In what follows, we prove that $y=0$. Assume to the contrary that $y>0$. From Lemma~\ref{lemma: fixed curve} it follows that $x \leq q(y)$, which in view Lemma~\ref{lemma:monotonicity}, \eqref{eq:q_J leq 1} and \eqref{eq:direct} implies 
 \begin{align}\label{eq:numerical proof1}
y=g(\epsilon,x,y)\leq g(\epsilon,q(y),y)  < y,
\end{align}
in contradiction; thus, $y=0$. Next, consider two cases:
\begin{enumerate}
\item If $P_0=0$ or $\lambda^{(2)}(0) >0$, then Item~\ref{item: fix pt lemma0} of Lemma~\ref{lemma:fix pt}  implies that $x=0$. Hence, every $(f,g)$-fixed point satisfies $y=x=0$. In view of Theorem~\ref{theorem:fix pt char}, it follows that if \eqref{eq:direct} holds, then $\epsilon < \epsilon^*_2$, so
\begin{align}
\epsilon^*_2 \geq  \inf\limits_{\substack{y \in (0,1]  \\ q_2(y)\leq 1}} \frac{y}{g(1,q(y),y)}
\end{align} 
which with \eqref{eq:th 1st part} completes the proof when $P_0=0$ or $\lambda^{(2)}(0) >0$. 

\item If $P_0>0$ and $\lambda^{(2)}(0) =0$, it is not true in general that for every fixed point $(x,y)$, $y=0$ implies $x=0$. However, if in addition to \eqref{eq:direct},
\begin{align} \label{eq:direct2}
\epsilon <  \frac1{P_0}\cdot \inf_{(0,1]}\frac{x}{\lambda^{(1)}\left(1- \rho^{(1)}\left(1-x \right)\right)},
\end{align}
and $y=0$ for some fixed point $(x,y)$, then $x=0$. To see this, assume to the contrary that $x>0$. In view of \eqref{eq:f} and \eqref{eq:direct2} it follows that
\begin{align} \label{eq:x fixed}
x= f(\epsilon,x,0) = \epsilon \cdot P_0 \cdot \lambda^{(1)}\left(1- \rho^{(1)}\left(1-x \right)\right) < x
\end{align}
in contradiction; hence, if \eqref{eq:direct} and \eqref{eq:direct2} hold, $x=0$ thus $\epsilon < \epsilon^*_2$. This means that 
\begin{align}\label{eq:direct proof}
\epsilon^*_2 \geq \min\left\{ \inf\limits_{\substack{y \in (0,1]  \\ q_2(y)\leq 1}} \tfrac{y}{g(1,q(y),y)},\;\;\tfrac1{P_0}\cdot \inf_{(0,1]}\tfrac{x}{\lambda^{(1)}\left(1-  \rho^{(1)}\left(1-x \right)\right)}\right\}.
\end{align}
To complete the proof, we must show that when $P_0>0$ and $\lambda^{(2)}(0) =0$, then
\begin{align}\label{eq:converse proof}
\epsilon^*_2 \leq \min\left\{ \inf\limits_{\substack{y \in (0,1]  \\ q_2(y)\leq 1}} \tfrac{y}{g(1,q(y),y)},\;\;\tfrac1{P_0}\cdot \inf_{(0,1]}\tfrac{x}{\lambda^{(1)}\left(1-  \rho^{(1)}\left(1-x \right)\right)}\right\}.
\end{align}
If
\begin{align*}
\inf\limits_{\substack{y \in (0,1]  \\ q_2(y)\leq 1}} \frac{y}{g(1,q(y),y)} \leq \frac1{P_0}\inf_{(0,1]}\frac{y}{\lambda^{(1)}\left(1-  \rho^{(1)}\left(1-y \right)\right)},
\end{align*}
then \eqref{eq:converse proof} follows immediately from \eqref{eq:th 1st part}; hence we can assume that 
\begin{align} \label{eq:converse1}
\inf\limits_{\substack{y \in (0,1]  \\ q_2(y)\leq 1}} \frac{y}{g(1,q(y),y)} > \frac1{P_0}\inf_{(0,1]}\frac{y}{\lambda^{(1)}\left(1-  \rho^{(1)}\left(1-y \right)\right)}.
\end{align}
Let 
\begin{align} \label{eq:converse12}
\inf\limits_{\substack{y \in (0,1]  \\ q_2(y)\leq 1}} \frac{y}{g(1,q(y),y)} >\epsilon> \frac1{P_0}\inf_{(0,1]}\frac{y}{\lambda^{(1)}\left(1-  \rho^{(1)}\left(1-y \right)\right)},
\end{align}
and let $x_0 \in (0,1]$, such that $x_0=\epsilon \cdot P_0 \cdot \lambda^{(1)}\left(1-  \rho^{(1)}\left(1-x_0 \right)\right)$. Since $\lambda^{(2)}(0)=0$, it follows that $(x_0,0)$ is a fixed point with $x_0>0$, thus $\epsilon > \epsilon^*_2$. Since this is true for every $\epsilon> \tfrac1{P_0}\inf_{(0,1]}\frac{y}{\lambda^{(1)}\left(1-  \rho^{(1)}\left(1-y \right)\right)}$, then $\epsilon^*_2 \leq \tfrac1{P_0}\inf_{(0,1]}\frac{y}{\lambda^{(1)}\left(1-  \rho^{(1)}\left(1-y \right)\right)}$. In view of \eqref{eq:converse1}, it follows that \eqref{eq:converse proof} holds. This completes the proof for the $P_0>0$ and $\lambda^{(2)}(0) =0$ case.
\end{enumerate}

\section{Proof of Lemma~\ref{lemma:Multigap2cap}} \label{App:Multigap2cap}
The design rate of the $ L $-layer ensemble is given by
\begin{align*}
R=1-\sum_{i=1}^L\frac{\int_0^1\rho^{(i)}(x)\mathrm{d}x}{\int_0^1\lambda^{(i)}(x)\mathrm{d}x}\left (1-P_0^{(i)}\right ),
\end{align*}
and the threshold is $ \epsilon_L $. Thus, the overall gap to capacity is given by
\begin{align}\label{eq:Multigap2cap}
\begin{split}
\delta 
&= 1-R-\epsilon_L\\
&=\sum_{i=1}^L \frac{\int_0^1\rho^{(i)}(x)\mathrm{d}x}{\int_0^1\lambda^{(i)}(x)\mathrm{d}x}\left (1-P_0^{(i)}\right )-\epsilon_L\\
&=\epsilon_1+\delta_1+\sum_{i=2}^L\left (\epsilon_ia_s^{(i-1)}+\delta_i\right )\left (1-P_0^{(i)}\right ) -\epsilon_L\\
&=\delta_1+\sum_{i=2}^L \delta_i\left (1-P_0^{(i)}\right )+\epsilon_1+\sum_{i=2}^L \epsilon_ia_s^{(i-1)}\left (1-P_0^{(i)}\right ) -\epsilon_L.
\end{split}
\end{align}
In view of Construction~\ref{Const:multi}, $ P_0^{(i)}=\epsilon_{i-1}/\epsilon_{i} $ and $ a_s^{(i)}\leq 1 $. Hence,
\begin{align*}
\epsilon_1+\sum_{i=2}^L\left (\epsilon_ia_s^{(i-1)}\right )\left (1-P_0^{(i)}\right ) -\epsilon_L
&=\epsilon_1 +\sum_{i=2}^L a_s^{(i-1)}\left (\epsilon_i-\epsilon_{i-1}\right ) -\epsilon_L\\
&=\epsilon_1+\sum_{i=2}^L a_s^{(i-1)}\left (\epsilon_i-\epsilon_{i-1}\right )-\epsilon_L\\
&\hspace*{4mm} -\epsilon_2 -\sum_{i=3}^{L-1}\left (\epsilon_i-\epsilon_{i-1}\right ) +\epsilon_{L-1}\\
&=\sum_{i=2}^L \left (a_s^{(i-1)}-1\right )\left (\epsilon_i-\epsilon_{i-1}\right )\\
&\leq 0,
\end{align*}
which combined with \eqref{eq:Multigap2cap} completes the proof.

\section{Proof of Lemma~\ref{lemma:valid opt schedule}} \label{App:valid opt schedule}
To prove Lemma~\ref{lemma:valid opt schedule} we need the following lemma.

\begin{lemma}\label{lemma:valid schedule}
A scheduling scheme is valid if and only if, $\epsilon^{(1)}_{\mathrm{eff}}(y_l)<\epsilon_1$, for some iteration $l$.
\end{lemma}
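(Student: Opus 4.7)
The first step is to record a simple monotonicity property shared by every scheduling scheme: the sequence $\{y_l\}$ is non-increasing in $l$, since at each iteration either $y_l=y_{l-1}$ (when the decoder skips type-2) or $y_l=g(\epsilon,x_{l-1},y_{l-1})\leq y_{l-1}$ (by Lemma~\ref{lemma:monotonicity} combined with the initial condition $y_{-1}=1$, in the style of the induction underlying Lemma~\ref{lemma:mono of DE}). Because $\Lambda^{(2)}(1-\rho^{(2)}(1-\cdot))$ is non-decreasing, the effective erasure probability $\epsilon^{(1)}_{\mathrm{eff}}(\epsilon,y_l)$ is also non-increasing in $l$ and therefore has a limit $\epsilon^*_{\mathrm{eff}}\in[0,\epsilon]$.

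For the ``if'' direction ($\Leftarrow$), suppose there exists $L$ with $\bar\epsilon\triangleq \epsilon^{(1)}_{\mathrm{eff}}(\epsilon,y_L)<\epsilon_1$. Monotonicity gives $\epsilon^{(1)}_{\mathrm{eff}}(\epsilon,y_l)\leq \bar\epsilon$ for all $l\geq L$, so using \eqref{eq:f with eff},
\begin{align*}
x_{l+1}\;\leq\;\bar\epsilon\cdot\lambda^{(1)}\!\bigl(1-\rho^{(1)}(1-x_l)\bigr),\qquad l\geq L.
\end{align*}
A monotone-induction sandwich against the auxiliary single-layer DE $v_L=x_L$, $v_{l+1}=\bar\epsilon\cdot\lambda^{(1)}(1-\rho^{(1)}(1-v_l))$, yields $x_l\leq v_l$. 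Since $\bar\epsilon<\epsilon_1$, the characterization \eqref{eq:1D th numeric} implies $\bar\epsilon\cdot\lambda^{(1)}(1-\rho^{(1)}(1-x))<x$ for every $x\in(0,1]$, so $v_l\searrow 0$; hence $x_l\to 0$ and the scheduling is valid.

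For the ``only if'' direction I would argue the contrapositive: assume $\epsilon^{(1)}_{\mathrm{eff}}(\epsilon,y_l)\geq \epsilon_1$ for every $l$, and show $\lim_{l\to\infty}x_l>0$. Here $\epsilon^*_{\mathrm{eff}}\geq \epsilon_1$, and plugging $\epsilon^{(1)}_{\mathrm{eff}}(\epsilon,y_l)\geq \epsilon^*_{\mathrm{eff}}$ into \eqref{eq:f with eff} gives $x_{l+1}\geq \epsilon^*_{\mathrm{eff}}\cdot\lambda^{(1)}(1-\rho^{(1)}(1-x_l))$. Sandwiching against $u_{-1}=1$, $u_{l+1}=\epsilon^*_{\mathrm{eff}}\cdot\lambda^{(1)}(1-\rho^{(1)}(1-u_l))$, gives $x_l\geq u_l$. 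By \eqref{eq:1D th numeric}, since $\epsilon^*_{\mathrm{eff}}\geq \epsilon_1$, there exists $x^\star\in(0,1]$ with $\epsilon^*_{\mathrm{eff}}\cdot\lambda^{(1)}(1-\rho^{(1)}(1-x^\star))\geq x^\star$, so $u_l$ converges to a strictly positive fixed point rather than to $0$, contradicting validity.

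The main obstacle is the boundary case $\epsilon^*_{\mathrm{eff}}=\epsilon_1$ in the contrapositive: one has to rule out the possibility that the infimum defining $\epsilon_1$ in \eqref{eq:1D th numeric} is only approached asymptotically (in which case the auxiliary 1D DE at $\epsilon_1$ could in principle still decay to $0$). The way I would close this is by invoking continuity of $x/\lambda^{(1)}(1-\rho^{(1)}(1-x))$ on $(0,1]$, together with the boundary behavior $q_1(1)=1$ and $\lim_{x\to 0}q_1(x)=0$ from Lemma~\ref{lemma:q_L(0)}, to argue that the infimum is attained at some $x^\star>0$, securing the strictly positive fixed point of $u_l$ at $\epsilon^*_{\mathrm{eff}}=\epsilon_1$.
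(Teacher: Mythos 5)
Your sandwich argument against an auxiliary single-layer DE is a sound route and differs somewhat from the paper's, which passes immediately to the limit pair $(x,y)=\lim_l(x_l,y_l)$, observes that $x$ solves $x=\epsilon^{(1)}_{\mathrm{eff}}(y)\,\lambda^{(1)}(1-\rho^{(1)}(1-x))$, and then strings together the equivalences $\exists l,\;\epsilon^{(1)}_{\mathrm{eff}}(y_l)<\epsilon_1 \Leftrightarrow \epsilon^{(1)}_{\mathrm{eff}}(y)<\epsilon_1 \Leftrightarrow$ no solution in $(0,1]\Leftrightarrow x=0$. Both routes rely on the monotonicity of $y_l$ recorded in your first step, and both run into the boundary case $\epsilon^{(1)}_{\mathrm{eff}}(y)=\epsilon_1$ that you correctly flagged; the paper's chain simply treats ``$<\epsilon_1$'' and ``no solution'' as equivalent without comment.

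Your proposed repair of the boundary case does not work, though. You invoke Lemma~\ref{lemma:q_L(0)}, but that lemma concerns $q_1(x)=x\,\Lambda^{(1)}(1-\rho^{(1)}(1-x))/\lambda^{(1)}(1-\rho^{(1)}(1-x))$, which is not the function $x/\lambda^{(1)}(1-\rho^{(1)}(1-x))$ appearing in \eqref{eq:1D th numeric}; the extra factor $\Lambda^{(1)}(\cdot)\to 0$ is exactly what drives $q_1\to 0$. For the function that actually governs the threshold, one has (when $\lambda^{(1)}_2>0$)
\begin{align*}
\lim_{x\to 0}\frac{x}{\lambda^{(1)}(1-\rho^{(1)}(1-x))}=\frac{1}{\lambda^{(1)}_2\,\rho^{(1)\prime}(1)},
\end{align*}
which is strictly positive, not zero. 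The stability condition gives $\epsilon_1\le 1/(\lambda^{(1)}_2\rho^{(1)\prime}(1))$, and when this is tight — as it becomes asymptotically for the capacity-approaching families used in Section~\ref{Sec:C1} — the infimum in \eqref{eq:1D th numeric} is only approached as $x\to 0$ and is not attained, so no positive fixed point $x^\star$ exists at $\epsilon_1$ and your contrapositive argument stalls. The gap you identified is therefore genuine but unresolved by your sketch; the paper sidesteps it by implicitly taking the set in the sup-definition of $\epsilon_1$ to be $[0,\epsilon_1)$, which amounts to assuming attainment rather than proving it.
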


\begin{proof}
Recall the definition of the type-1 threshold,
\begin{align} \label{eq:loc th def}
\epsilon_1 = \sup \{ \epsilon \colon x=\epsilon\lambda^{(1)}(1-\rho^{(1)}(1-x)) \text{ has no solution in } (0,1]\},
\end{align}
and let $x=\lim_{l \to \infty}x_l$ and $y=\lim_{l \to \infty}y_l$.
Since under every scheduling scheme $y_l$ is monotonically non-increasing in $l$, then in view of \eqref{eq:f with eff},
\begin{align*}
\exists l \in \mathbb{N}, \epsilon^{(1)}_{\mathrm{eff}}(y_l)<\epsilon_1 
&\Leftrightarrow \epsilon^{(1)}_{\mathrm{eff}}(y) < \epsilon_1 \\
&\Leftrightarrow x=\epsilon^{(1)}_{\mathrm{eff}}(y) \lambda^{(1)}(1-\rho^{(1)}(1-x)) \text{ has no solution for }x \in (0,1] \\
&\Leftrightarrow \lim\limits_{l \to \infty}x_l=0.
\end{align*} 
\end{proof}
We proceed with the proof of Lemma~\ref{lemma:valid opt schedule}.
Let $ (x_l,y_l) $ be defined as in \eqref{eq:opt schedule}, let $ (x,y)=\lim_{l\to\infty}(x_l,y_l) $, and assume in contradiction that $\epsilon^{(1)}_{\mathrm{eff}}(y) \geq \epsilon_1$. Since $\eta>0$, letting $l \to \infty$ in \eqref{eq:opt schedule} implies that $(x,y)$ is a non-trivial $(f,g)$-fixed point. However, in view of Theorem~\ref{theorem:fix pt char}, if $\epsilon<\epsilon_2$, then every $(f,g)$-fixed point is the trivial point, in contradiction. Thus,  $\epsilon^{(1)}_{\mathrm{eff}}(y) < \epsilon_1$ which, due to Lemma~\ref{lemma:valid schedule}, completes the proof.

\section{Proof of Lemma~\ref{lemma:opt schedule}} \label{App:opt schedule}
Let $\left\{l^{(1)}_k\right\}_{k=1}^{N^{(1)}_{L_2}}$ and  $\left\{l^{(2)}_k\right\}_{k=1}^{N^{(2)}_{L_2}}$ be the type-2-update iterations of the scheduling scheme described in \eqref{eq:x_l_k} and in some arbitrary valid scheduling scheme, receptively. We need to show that $N_{2}^{(1)} \leq N_{2}^{(2)}$. 
To proceed we need the following lemmas:
\begin{lemma}\label{lemma:x_s monotonic}
	$ x_s(\epsilon) $ as defined in Definition~\ref{def:h_eps} is monotonic non-decreasing in $ \epsilon\;. $
\end{lemma}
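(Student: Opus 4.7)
The plan is to exploit the fact that $h_\epsilon(x)$ depends linearly (and monotonically non-decreasingly) on $\epsilon$ for each fixed $x$, so that increasing $\epsilon$ can only enlarge the set over which the maximum in Definition~\ref{def:h_eps} is taken. Concretely, fix any $\epsilon_1 < \epsilon' \leq \epsilon'' < 1$. For every $x \in [0,1]$, since $\lambda^{(1)}(1-\rho^{(1)}(1-x)) \geq 0$, we have
\begin{align*}
h_{\epsilon''}(x) - h_{\epsilon'}(x) = (\epsilon'' - \epsilon')\,\lambda^{(1)}(1-\rho^{(1)}(1-x)) \geq 0.
\end{align*}
Hence $h_{\epsilon''}(x) \geq h_{\epsilon'}(x)$ pointwise on $[0,1]$.

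Next, define the super-level sets $S_\epsilon \triangleq \{x \in [0,1] : h_\epsilon(x) \geq 0\}$, so that $x_s(\epsilon) = \max S_\epsilon$. The pointwise inequality above immediately yields the set inclusion $S_{\epsilon'} \subseteq S_{\epsilon''}$: any $x$ with $h_{\epsilon'}(x) \geq 0$ also satisfies $h_{\epsilon''}(x) \geq 0$. Taking maxima of both sides gives $x_s(\epsilon') \leq x_s(\epsilon'')$, which is the desired monotonicity. Equivalently, one can argue directly: since $h_{\epsilon'}(x_s(\epsilon')) \geq 0$ by definition of $x_s(\epsilon')$, the monotonicity of $h$ in $\epsilon$ gives $h_{\epsilon''}(x_s(\epsilon')) \geq 0$, so $x_s(\epsilon') \in S_{\epsilon''}$, and the maximum $x_s(\epsilon'')$ of $S_{\epsilon''}$ is at least $x_s(\epsilon')$.

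There is no real obstacle here; the only thing to verify is that both maxima exist and are finite, which is already ensured by Definition~\ref{def:h_eps} together with the discussion following it (for $\epsilon > \epsilon_1$ the set $S_\epsilon$ is nonempty, and it is closed as a super-level set of the continuous function $h_\epsilon$, hence the maximum is attained). Thus the lemma reduces to a one-line monotonicity argument and does not require any invocation of the 2D density-evolution machinery.
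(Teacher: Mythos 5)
Your proposal is correct and, in particular, the ``direct argument'' you give at the end (apply $h_{\epsilon''}(x_s(\epsilon')) \geq h_{\epsilon'}(x_s(\epsilon')) \geq 0$, then use the definition of $x_s(\epsilon'')$ as a maximum) is exactly the paper's proof. The set-inclusion framing via $S_{\epsilon'} \subseteq S_{\epsilon''}$ is a cosmetic repackaging of the same observation, not a different route.
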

\begin{proof}
	Let $ \epsilon_1\leq \epsilon_2 $, and consider $ x_s(\epsilon_1),\;x_s(\epsilon_2)\;. $ In view of Definition~\ref{def:h_eps}, 
	\begin{align*}
		h_{\epsilon_2}\left (x_s(\epsilon_1)\right ) 
		&\triangleq \epsilon_2\lambda^{(1)}(1-\rho^{(1)}(1-x_s(\epsilon_1)))-x_s(\epsilon_1)\\
		&\geq		\epsilon_1\lambda^{(1)}(1-\rho^{(1)}(1-x_s(\epsilon_1)))-x_s(\epsilon_1)\\
		&\triangleq	h_{\epsilon_1}\left (x_s(\epsilon_1)\right )\\
		&\geq 		0\;. 
	\end{align*}
	Thus, $ x_s(\epsilon_2)\triangleq \max\{x\in[0,1]\;\colon h_{\epsilon_2}\left (x\right )\geq 0\}\geq x_s(\epsilon_1)\;. $
\end{proof}
\begin{lemma} \label{lemma:opt schdule induction}
Let
\begin{align}\label{eq:eps_k 1 and 2}
\begin{split}
&\varepsilon^{(1)}_k=\epsilon^{(1)}_{\mathrm{eff}}\left(y_{l^{(1)}_k}\right),\quad 1\leq k \leq N^{(1)}_{2}, \\
&\varepsilon^{(2)}_k=\epsilon^{(1)}_{\mathrm{eff}}\left(y_{l^{(2)}_k}\right),\quad 1\leq k \leq N^{(2)}_{2}.
\end{split}
\end{align}
Then, for every $ 1 \leq k \leq \min\left(N^{(1)}_{2},N^{(2)}_{2} \right) $,
\begin{align} \label{eq:opt schdule induction}
y_{l^{(1)}_k} \leq y_{l^{(2)}_k},\quad\text{and}\quad \varepsilon^{(1)}_k\leq \varepsilon^{(2)}_k,\quad\text{and}\quad x_{l^{(1)}_k} \leq x_{l^{(2)}_k}\;.
\end{align}
\end{lemma}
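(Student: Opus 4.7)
\medskip

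\noindent\emph{Proof proposal.}

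\medskip

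The proof is by induction on $k$, leveraging three monotonicity facts: monotonicity of $g$ and $\epsilon^{(1)}_{\mathrm{eff}}(\epsilon,\cdot)$ (both follow from Lemma~\ref{lemma:monotonicity}/Definition~\ref{def:eff epsilon}), and monotonicity of the map $\epsilon \mapsto x_s(\epsilon)$ (Lemma~\ref{lemma:x_s monotonic}).

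\medskip

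For the base case $k=1$, note that in \emph{both} schemes no type-2 update has yet been applied before iteration $l_1^{(\cdot)}$, so $y$ remains at its initial value and $y_{l_1^{(1)}}=y_{l_1^{(2)}}=1$, whence $\varepsilon_1^{(1)}=\varepsilon_1^{(2)}=\epsilon$. By construction of scheme~1, $x_{l_1^{(1)}}=x_s(\epsilon)$. For scheme~2, prior to $l_1^{(2)}$ only type-1 iterations (at effective erasure $\epsilon$) have been performed; by Lemma~\ref{lemma:mono of DE} the resulting $x$-sequence is monotone non-increasing and, in view of Definition~\ref{def:h_eps} (maximality of $x_s(\epsilon)$ as a solution of $h_\epsilon(x)\!\geq\!0$), it is bounded below by $x_s(\epsilon)$. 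Hence $x_{l_1^{(2)}}\geq x_s(\epsilon)=x_{l_1^{(1)}}$.

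\medskip

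For the inductive step, assume \eqref{eq:opt schdule induction} at index $k$. Between the $k$-th and $(k{+}1)$-th type-2 updates in each scheme, $y$ is held constant at the value produced by the $k$-th update, so
\[
y_{l_{k+1}^{(i)}} \;=\; g\!\bigl(\epsilon,\,x_{l_k^{(i)}},\,y_{l_k^{(i)}}\bigr), \qquad i\in\{1,2\}.
\]
By the induction hypothesis $x_{l_k^{(1)}}\leq x_{l_k^{(2)}}$ and $y_{l_k^{(1)}}\leq y_{l_k^{(2)}}$, and by the monotonicity of $g$, $y_{l_{k+1}^{(1)}}\leq y_{l_{k+1}^{(2)}}$; applying the monotonicity of $\epsilon^{(1)}_{\mathrm{eff}}(\epsilon,\cdot)$ in its second argument then yields $\varepsilon_{k+1}^{(1)}\leq\varepsilon_{k+1}^{(2)}$. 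Scheme~1 then runs type-1 iterations until it is again fully stuck, so $x_{l_{k+1}^{(1)}}=x_s(\varepsilon_{k+1}^{(1)})$. Scheme~2 runs some number of type-1 iterations at effective erasure $\varepsilon_{k+1}^{(2)}$ (again with $y$ frozen), and by the same ``bounded below by the fixed point'' argument used in the base case, $x_{l_{k+1}^{(2)}}\geq x_s(\varepsilon_{k+1}^{(2)})$. Combining with Lemma~\ref{lemma:x_s monotonic} gives
\[
x_{l_{k+1}^{(1)}}\;=\;x_s(\varepsilon_{k+1}^{(1)})\;\leq\;x_s(\varepsilon_{k+1}^{(2)})\;\leq\;x_{l_{k+1}^{(2)}},
\]
closing the induction.

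\medskip

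The main conceptual obstacle is not an algebraic one but a bookkeeping one: one must interpret $l_k$ carefully as ``the iteration index at which the $k$-th type-2 update is triggered in the corresponding scheme'' (so $l_k^{(1)}$ and $l_k^{(2)}$ refer to different absolute iteration counts), and confirm that between consecutive type-2 updates in scheme~2, the type-1 recursion behaves exactly as a single-layer BP recursion with channel parameter $\varepsilon_{k+1}^{(2)}$ --- this is what justifies both the lower bound $x_{l_{k+1}^{(2)}}\geq x_s(\varepsilon_{k+1}^{(2)})$ and the constancy of $y$ between updates. Once this correspondence is established, the argument reduces to routine applications of the monotonicity lemmas cited above.
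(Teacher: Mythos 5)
Your proof is correct and follows essentially the same approach as the paper's: induction on $k$, with the base case $y_{l^{(1)}_1}=y_{l^{(2)}_1}=1$ and the inductive step driven by the monotonicity of $g$, of $\epsilon^{(1)}_{\mathrm{eff}}(\epsilon,\cdot)$, and of $x_s(\cdot)$ (Lemma~\ref{lemma:x_s monotonic}). Your argument is actually slightly more explicit than the paper's on the step $x_s(\varepsilon^{(i)}_{k})\leq x_{l^{(i)}_{k}}$ for the arbitrary scheme, which you justify via the ``bounded below by the fixed point'' observation using Lemma~\ref{lemma:mono of DE} and Definition~\ref{def:h_eps}, whereas the paper asserts it more tersely.
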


\begin{proof}
By induction on $1 \leq k \leq \min\left(N^{(1)}_{2},N^{(2)}_{2} \right)$. 
In the first type-2 update, we have $y_{l^{(1)}_1}=1=y_{l^{(2)}_1}$ and $ \varepsilon_1\triangleq \epsilon^{(1)}_{\mathrm{eff}}\left (y_{l^{(1)}_1}\right )=\epsilon$. Thus, in view of Definition~\ref{def:h_eps}, in the first type-2 update $ =x_{l^{(2)}_1}\geq x_s(\epsilon)=x_{l^{(1)}_1}$.
Hence \eqref{eq:opt schdule induction} holds for $k=1$. Assume correctness for some type-2 update $k<\min\left(N^{(1)}_{2},N^{(2)}_{2} \right)$, and consider update $k+1$. In view of Lemma~\ref{lemma:monotonicity}, \eqref{eq:y_l_k}-\eqref{eq:x_l_k}, and the induction assumption,
$y_{l^{(1)}_{k+1}}= g\left(\epsilon,x_{l^{(1)}_{k}},y_{l^{(1)}_{k}}\right) \leq g\left(\epsilon,x_{l^{(2)}_{k}},y_{l^{(2)}_{k}}\right) =y_{l^{(2)}_{k+1}}\;,$
which together with \eqref{eq:eps_k} implies that $ \varepsilon^{(1)}_{k+1}\triangleq \epsilon^{(1)}_{\mathrm{eff}}\left (y_{l^{(1)}_{k+1}}\right )\leq \epsilon^{(1)}_{\mathrm{eff}}\left (y_{l^{(2)}_{k+1}}\right )\triangleq\varepsilon^{(2)}_{k+1}.$ In view of Lemma~\ref{lemma:x_s monotonic}, it follows that $ x_{l^{(1)}_{k+1}}\triangleq x_s(\varepsilon^{(1)}_{k+1})\leq x_s(\varepsilon^{(2)}_{k+1})\leq x_{l^{(2)}_{k+1}}.$ By induction, we complete the proof.
\end{proof}
We proceed with the proof of Lemma~\ref{lemma:opt schedule}.
Assume, on the contrary, that  $N_{2}^{(1)} > N_{2}^{(2)}$. Lemma~\ref{lemma:opt schdule induction} and the monotonicity of $ \varepsilon_k $ in $ k $ imply that
\begin{align} \label{eq:eps^1 > eps_L}
\epsilon^{(2)}_{N_{2}^{(2)}} \geq \epsilon^{(1)}_{N_{2}^{(2)}}  \geq \epsilon^{(1)}_{N_{2}^{(1)}-1} \geq \epsilon_1,
\end{align}
which, in view of Lemma~\ref{lemma:valid schedule} yields that the scheduling scheme indexed by $\left\{l^{(2)}_k\right\}_{k=1}^{N^{(2)}_{2}}$ is not valid, in contradiction. Thus, $N_{2}^{(1)} \leq N_{2}^{(2)}$.

\section{Proof of Lemma~\ref{lemma:x_s c.a.}} \label{App:x_s c.a.}
In view of Definition~\ref{def:h_eps}, let $h^{(k)}_\epsilon(x)=\epsilon\lambda^{(k)}(1-\rho^{(k)}(1-x))-x$. Eq. \eqref{eq:c.a. known} yields
\begin{align}\label{eq:c.a. h_eps}
h_\epsilon(x)=\lim_{k \to \infty}h^{(k)}_\epsilon(x)= \left\{ 
\begin{array}{ll}
\left( \frac{\epsilon}{\epsilon_1}-1\right) x & 0 \leq x \leq \epsilon_1 \\
\epsilon- x & \epsilon_1 \leq x \leq \epsilon 
\end{array}
\right.
\end{align}
For every $k \in \mathbb{N}$, and $x\in (\epsilon,1]$, 
\begin{align*}
h_\epsilon^{(k)}(x) 
&= \epsilon\lambda^{(k)}(1-\rho^{(k)}(1-x))-x \\ \notag
&\leq \epsilon-x \\ \notag
&<0,
\end{align*}
Thus
\begin{align}\label{eq:x_s^k <eps}
x_s^{(k)}(\epsilon) \leq \epsilon ,\quad \forall k \in \mathbb{N}.
\end{align}
In addition, for every $0<a<\epsilon$ there exists $K_0$ such that 
\begin{align*}
h_\epsilon^{(k)}(\epsilon-a)>0 ,\quad \forall k \geq K_0,
\end{align*}
so $x_s^{(k)}(\epsilon) \geq \epsilon-a$, for every $ k \geq K_0$; hence,
\begin{align} \label{eq:x_s liminf}
\liminf_{k \to \infty} x_s^{(k)}(\epsilon) \geq \epsilon.
\end{align}
Combining \eqref{eq:x_s^k <eps} and \eqref{eq:x_s liminf} implies that $\lim_{k \to \infty}x^{(k)}_s(\epsilon)$ exists, and completes the proof.

\end{document}